\begin{document}
\title{Quantum Circuit for Quantum Fourier Transform for Arbitrary Qubit Connectivity Graphs}
%
%
\author{Kamil Khadiev\inst{1}\orcidID{0000-0002-5151-9908} \and
Aliya Khadieva\inst{1,2}\orcidID{0000-0003-4125-2151} \and
Vadim Sagitov\inst{1}
\and
Kamil Khasanov\inst{1}
}
\authorrunning{K. Khadiev et al.}
%
\institute{Kazan Federal University, Kazan, Tatarstan, Russia \and
Univerisy of Latvia, Riga, Latvia\\
\email{kamilhadi@gmail.com}}
\maketitle              
\begin{abstract}
In the paper, we consider quantum circuits for the Quantum Fourier Transform (QFT) algorithm. The QFT algorithm is a very popular technique used in many quantum algorithms.
We present a generic method for constructing quantum circuits for this algorithm implementing on quantum devices with restrictions. Many quantum devices (for example, based on superconductors) have restrictions on applying two-qubit gates. These restrictions are presented by a qubit connectivity graph.  Typically, researchers consider only the linear nearest neighbor (LNN) architecture of the qubit connection, but current devices have more complex graphs. We present a method for arbitrary connected graphs that minimizes the number of CNOT gates in the circuit for implementing on such architecture.

We compare quantum circuits built by our algorithm with existing quantum circuits optimized for specific graphs that are Linear-nearest-neighbor (LNN) architecture, ``sun'' (a cycle with tails, presented by the 16-qubit IBMQ device) and ``two joint suns'' (two joint cycles with tails, presented by the 27-qubit IBMQ device). Our generic method gives similar results with existing optimized circuits for ``sun'' and ``two joint suns'' architectures, and a circuit with slightly more CNOT gates for the LNN architecture. At the same time, our method allows us to construct a circuit for arbitrary connected graphs. 
\keywords{QFT, Fourier transform, quantum circuit, NP-hard problem}
\end{abstract}
%
%
\section{Introduction}
\emph{Quantum computing} \cite{nc2010,a2017,aazksw2019part1} is one of the hot topics in computer science of the last decades.
There are many problems in which quantum algorithms outperform the best known classical ones \cite{quantumzoo}. 
One of the well-known computational techniques used in many quantum algorithms is the Quantum Fourier Transform (QFT) \cite{k1995}.
It is used in quantum addition \cite{d2000}, quantum phase estimation (QPE) \cite{k1995}, quantum amplitude estimation (QAE)\cite{bhmt2002}, the algorithm for solving linear systems of equations \cite{hhl2009}, Shor’s factoring algorithm \cite{s1999}, and others. 

In this paper, we are interested in the circuit-based implementation of this algorithm on quantum devices. We are focusing on minimization of two-qubit quantum gates in such a circuit because they are the most ``expensive'' gates to implement. Many types of quantum computers (for example, quantum devices based on superconductors) do not allow us to apply two-qubit gates to an arbitrary pair of qubits. They have a specific architecture of qubits connectivity that are represented by a qubit connectivity graph. Vertices of the graph correspond to qubits, and two-qubit gates can be applied only to qubits corresponding to vertices connected by an edge. In this paper, we focus on the number of CNOT gates in a quantum circuit for the QFT algorithm for devices with a specific qubit connectivity graph. Namely, CNOT is a two-qubit gate that is a quantum analogue of ``excluding or'' operation for classical computation. Let the CNOT cost of a circuit be the number of CNOT gates in the circuit. The CNOT cost of a circuit implementation in a linear nearest-neighbor (LNN) architecture (where the graph is just a chain) was explored by Park and Ahn in \cite{park2023reducing}. They presented a circuit for the QFT algorithm that has $n^2+n-4$ CNOT cost, where $n$ is the number of qubits. It improved the previous results of \cite{nc2010,fdh2004,swd2011,kds2017,bbwdr2019,best1996,tko2007,pa2022}. At the same time, as the authors mentioned, their technique cannot be generalized to more complex graphs. In \cite{k2024aliya}, Khadieva suggested a quantum circuit for a more complex architecture that is a cycle with tails (like a ``sun'' or ``two joint suns''). The CNOT cost of this circuit is $1.5n^2$. In \cite{kkcw2025}, Khadiev et al. suggested a generic method for an arbitrary connected graph.

Here we present a general method that allows us to develop a quantum circuit of the QFT algorithm for an arbitrary connected graph for qubit connectivity. Our algorithm gives a better result compared to \cite{kkcw2025} with respect to the CNOT cost. We define an NP-hard problem called the (3,2,1)-covering path problem that is a modification of the Shortest covering path problem \cite{cpr1994}, the Hamiltonian path problem, and the Travelling salesman problem. We construct our circuit based on the solution of the problem.
The solution uses a dynamic programming approach. The time complexity of the algorithm for constructing the circuit is $O((m+n)2^n)$, where $n$ is the number of qubits and $m$ is the number of edges in the qubit connectivity graph. Additionally, we suggest an approximate solution of the (3,2,1)-covering path problem that has $O((m+n)\log n)$ time complexity.

The constructed circuit has the CNOT cost in the range between $n^2-2n-2$ and $2n^2-2n-2$ 
depending on the complexity of the graph. The result is better than the circuit from \cite{kkcw2025} whose maximum possible CNOT cost is $3n^2-3n$. In addition, we compare our results with circuits for specific graphs. In the case of LNN, the CNOT cost is $1.5n^2-2.5n-1$ that is $1.5$ times larger than the result of \cite{park2023reducing} and the same as the circuit of \cite{k2024aliya}. For more complex graphs such as 16-qubit Falcon r4P and 27-qubit Falcon r5.11 architectures of IBMQ, which is a cycle with tails (like a ``sun'') or its modifications, our generic technique gives the same CNOT cost as the CNOT cost of the circuit \cite{k2024aliya} that was specially constructed for these architectures. In all these cases, our result gives a better circuit than \cite{kkcw2025}. The difference is about 5\%.

The structure of this paper is the following.
Section \ref{sec:prelims} describes the required notations and preliminaries. Graph theory tools are presented in Section \ref{sec:tools}.
 The circuit for the Quantum Fourier Transform algorithm is discussed in Section \ref{sec:qft}. The final Section \ref{sec:concl} concludes the paper and contains some open questions.

\section{Preliminaries}\label{sec:prelims}
\subsection{Graph Theory}
Let us consider an undirected unweighted graph $G=(V,E)$, where $V$ is the set of vertices and $E$ is the set of undirected edges. Let $n=|V|$ be the number of vertices, and $m=|E|$ be the number of edges. 

A non-simple path $P$  is a sequence of vertices $(v_{i_1},\dots,v_{i_h})$ that are connected by edges, that is $(v_{i_j},v_{i_{j+1}})\in E$ for all $j\in\{1,\dots,h-1\}$. Note that a non-simple path can contain duplicates.
Let the length of the path be the number of edges in the path, $len(P)=h-1$.

A path $P=(v_{i_1},\dots,v_{i_h})$ is called simple if there are no duplicates among $v_{i_1},\dots,v_{i_h}$. 
The distance $dist(v,u)$ is the length of the shortest path between vertices $v$ and $u$. Typically, when we say just a ``path'', we mean a ``simple path''. 

Let $\textsc{Neighbors}(v)$ be a list of neighbors for a vertex $v$, i.e., $\textsc{Neighbors}(v)=(u_{i_1},\dots,u_{i_k})$ such that $(v,u_{i_j})\in E$, and $|\textsc{Neighbors}(v)|=k$ is the length of the list.


\subsection{Quantum circuits}\label{sec:qcirc}
Quantum circuits consist of qubits and a sequence of gates applied to these qubits. A state of a qubit is a column-vector from ${\cal H}^2$ Hilbert space. It can be represented by $a_0|0\rangle+a_1|1\rangle$, where $a_0,a_1$ are complex numbers such that $|a_0|^2+|a_1|^2=1$, and $|0\rangle$ and $|1\rangle$ are unit vectors. Here we use the Dirac notation. A state of $n$ qubits is represented by a column-vector from ${\cal H}^{2^n}$ Hilbert space. It can be represented by $\sum_{i=0}^{2^n-1}a_i|i\rangle$, where $a_i$ is a complex number such that $\sum_{i=0}^{2^n-1}|a_i|^2=1$, and $|0\rangle,\dots |2^n-1\rangle$ are unit vectors. Graphically, on a circuit, qubits are presented as parallel lines. 

As basic gates, we consider the following ones:

  $H=\frac{1}{\sqrt{2}}\begin{pmatrix}
1 & 1 \\
1 & -1 
\end{pmatrix}$,     $X=\begin{pmatrix}
0 & 1 \\
1 & 0 
\end{pmatrix}$,
 $R_y(\xi)=\begin{pmatrix}
cos(\xi/2) & -sin(\xi/2) \\
sin(\xi/2) & cos(\xi/2) 
\end{pmatrix}$,

$R_z(\xi)=\begin{pmatrix}
e^{\frac{i\xi}{2}} & 0 \\
0 & e^{-\frac{i\xi}{2}} 
\end{pmatrix}$,
$CNOT=\begin{pmatrix}
1 & 0 & 0 & 0\\
0 & 1 & 0 & 0\\
0 & 0 & 0 & 1\\
0 & 0 & 1 & 0 
\end{pmatrix}$.  

Additionally, we consider four non-basic gates

 $R_k=\begin{pmatrix}
1 & 0 \\
0 & e^{\frac{i\pi}{2^{k-1}}} 
\end{pmatrix}$,
$CR_k=\begin{pmatrix}
1 & 0 & 0 & 0\\
0 & 1 & 0 & 0\\
0 & 0 & 1 & 0\\
0 & 0 & 0 & e^{\frac{i\pi}{2^{k-1}}} 
\end{pmatrix}$,

$CR_z(\xi)=\begin{pmatrix}
1 & 0 & 0 & 0\\
0 & 1 & 0 & 0\\
0 & 0 & e^{\frac{i\xi}{2}}  & 0\\
0 & 0 & 0 & e^{-\frac{i\xi}{2}} 
\end{pmatrix}$, 
$SWAP=\begin{pmatrix}
1 & 0 & 0 & 0\\
0 & 0 & 1 & 0\\
0 & 1 & 0 & 0\\
0 & 0 & 0 & 1 
\end{pmatrix}$,
%

The reader can find more information about quantum circuits in \cite{nc2010,aazksw2019part1,k2022lecturenotes}


\section{(3,2,1)-Covering Path Problem as a Tool}\label{sec:tools}
Let us consider an undirected unweighted connected graph $G=(V,E)$ such that $n=|V|$ is a number of vertices and $m=|E|$ is a number of edges.

In this section, we consider the ``(3,2,1)-Covering Path'' problem ((3,2,1)-CPP or (3,2,1)-CP problem) that is a modification of the well-known shortest covering path problem  (SCPP problem)\cite{cpr1994}. The description of (3,2,1)-CPP is presented below.

The ``(3,2,1)-Shortest Covering Path'' problem ((3,2,1)-CPP or (3,2,1)-CP problem) is defined as follows. Let $P=(v_{i_1},\dots,v_{i_k})$ be a \textbf{non-simple}  path. We say that the path covers all visiting vertices and vertices that are connected with visited vertices by one edge. Formally, the path $P$ covers a set of vertices $R(P)$ such that any vertex $v$ from this set is either
\begin{itemize}
    \item $v$ belongs to $P$ (there is $j\in\{1,\dots,k\}$ such that $v=v_{i_j}$);
    \item $v$ is connected with a vertex from $P$ (there is $j\in\{1,\dots,k\}$ such that $(v,v_{i_j})\in E$).
\end{itemize}
Let $B(P)=R(P)\backslash\{v_{i_1},\dots,v_{i_k}\}$, i.e. they are vertices connected with visited vertices by one edge. 

If the path $P$ covers all the vertices ($R(P)=V$), then we call it a 1-covering path or just a covering path. For a 1-covering path, we define a cost function that is $cost(P)=3(len(P)-1)+2|B(P)|$. The solution of the (3,2,1)-CP problem is the 1-covering path that minimizes the cost function. We call the solution (3,2,1)-covering path.

As the SCP problem, the (3,2,1)-CP problem has a strong connection with the Hamiltonian path problem and the Travelling salesman problem \cite{cormen2001}. Any connected graph has a (3,2,1)-covering path.

The decision version of the SCP problem is NP-complete \cite{cpr1994}. The Travelling salesman problem (TSP) is NP-hard. Similarly, by polynomial reduction of TSP to (3,2,1)-CPP, we can show that it is NP-hard.  


Let us estimate the maximum possible length of a covering path.

\begin{lemma}\label{lm:len-wnsh}
The length of a covering path in a connected graph $G$ of $n$ vertices is at most $2n-3$.
\end{lemma}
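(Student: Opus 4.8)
The plan is to establish the bound constructively: rather than reasoning about all covering paths (a non-simple path can of course be arbitrarily long), I would \emph{exhibit} one covering path whose length is at most $2n-3$, which bounds the length of the shortest covering path, and hence of any covering path one actually wants to use. The conceptual point that makes this work is that a covering path need not \emph{visit} every vertex, it only needs to cover $V$. Therefore any walk that happens to pass through all $n$ vertices is automatically a $1$-covering path (indeed one with $B(P)=\emptyset$), and it suffices to produce a short such walk.

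First I would fix a spanning tree $T$ of $G$, which exists because $G$ is connected and has exactly $n-1$ edges. Assuming $n\ge 2$ (the case $n=1$ is degenerate, where a single vertex already covers the graph), I would pick two vertices $u,v$ realizing the diameter of $T$, so that their distance in $T$ is $D=dist(u,v)\ge 1$. I would then take the walk from $u$ to $v$ obtained from a depth-first traversal of $T$ that keeps the path from $u$ to $v$ as its ``spine'': this walk traverses each of the $D$ edges on the path from $u$ to $v$ exactly once, and each of the remaining $n-1-D$ tree edges exactly twice (once on entering a side branch and once on returning from it).

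The length count then finishes the argument. The walk has length $D + 2(n-1-D) = 2(n-1)-D \le 2(n-1)-1 = 2n-3$, using $D\ge 1$. Since it is a walk inside $T\subseteq G$ that passes through all $n$ vertices, it is a (non-simple) path that covers $V$, i.e.\ a $1$-covering path, and its length is at most $2n-3$. Consequently the shortest covering path, and likewise the optimal $(3,2,1)$-covering path minimizing $cost(P)$, has length at most $2n-3$ as claimed.

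The step I expect to require the most care is justifying the existence of this ``spine'' walk with the stated edge multiplicities, equivalently that one can always arrange a depth-first traversal so that exactly the edges of one longest tree path are used once while every other edge is used twice. I would handle this by induction on the branches hanging off the path from $u$ to $v$, each branch being entered and exited exactly once and thus contributing twice its edge count, and I would check the small cases ($n=2$, where the bound reads $2n-3=1$ and a single edge-traversal suffices) separately to confirm the constant is correct.
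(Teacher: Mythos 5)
Your proof is correct, and it shares its backbone with the paper's argument: both fix a spanning tree $T$ of $G$ and bound an Euler-tour-style traversal, starting from the baseline $2(n-1)$ obtained by doubling every tree edge. Where you genuinely differ is in how the final unit is saved. The paper keeps the tour closed but exploits the covering relaxation: it omits the leaves of $T$ from the tour, which remain covered because each leaf is adjacent to its visited parent, and the skipped leaf visits (there are $\ell\ge 2$ leaves) bring the length down to $2n-3$ at the price of a nonempty $B(P)$. You instead keep every vertex on the walk but open up the tour: doubling all tree edges except those on a longest path between $u$ and $v$ leaves exactly $u$ and $v$ with odd degree, so an Eulerian trail from $u$ to $v$ of length $2(n-1)-D\le 2n-3$ exists --- this is the ``spine'' walk whose existence you rightly flag as the step needing care, and your branch-by-branch justification is the standard one. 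Your variant carries a small bonus: since your path visits all of $V$, it has $B(P)=\emptyset$ and hence $cost(P)=3(len(P)-1)\le 3(2n-4)$, which makes your closing claim --- that the \emph{cost-optimal} $(3,2,1)$-covering path also has length at most $2n-3$ --- immediate, because any covering path of length at least $2n-2$ already has cost at least $3(2n-3)$ from the length term alone; with the paper's leaf-skipping construction this inference would additionally have to absorb the $2|B(P)|$ term. Both proofs implicitly assume $n\ge 2$ (for $n=1$ the bound $2n-3$ is negative), which you at least acknowledge as a degenerate case.
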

\begin{proof}
    Let us consider a spanning tree of the graph $G=(V,E)$. It is a tree $T=(V,E')$, where $E'\subset E$. 
We can construct a non-simple path $P$ that is the Euler tour \cite{cormen2001} of the tree $T$ but does not visit the leaves of the tree. The path covers all the vertices of the graph $G$, but it maybe be does not minimize the cost. Each edge (except edges incident to leafs)  in the tour is visited at most twice (in the up and down direction).  Therefore, the length of the path $len(P)\leq 2n-\ell$, where $\ell$ is the number of leaves, and $\ell\geq 2$. So, we obtain the bound for the number of vertices in the path $2n-2$, and for the length of the path, the bound is $2n-3$.
\end{proof}

Let us present the algorithm for the (3,2,1)-CP problem.
Firstly, let us present a procedure $\textsc{ShortestPaths}(G)$ that constructs two $n\times n$-matrices $W$ and $A$ by a graph $G$. 
Elements of the matrix $W$ are lengths of the shortest paths between each pair of vertices in $G$, i.e. $W[v,u]=dist(v,u)$.
The matrix $A$ represents the shortest paths between the vertices of $G$. The element $A[v,u]$ is the last vertex in the shortest path between $v$ and $u$. In other words, if $t=A[v,u]$, then $P_{v,u}=P_{v,t}\circ u$, where $P_{v,u}$ is the shortest path between $v$ and $u$. Based on this fact, we can present a procedure $\textsc{GetShortestPath}(v,u)$ that computes $P_{v,u}$ using the matrix $A$. Note that the implementation does not add the first element of the path $P_{v,u}$ because we do not need it in our algorithm. The implementation of the procedure is presented in Algorithm \ref{alg:getpath}. (See Appendix \ref{apx:getpath})


We can construct these two matrices using $n$ invocations of the Breadth First Search (BFS) algorithm \cite{cormen2001}. The total time complexity for constructing the matrices is $O(n^3)$. The algorithm for constructing $A$ and $W$ is presented in Appendix \ref{apx:floyd} for completeness of presentation.

Let us define a function $D:2^{V}\times V\to\{0,\dots,n,\infty\}$ such that $D(S,v)$ is the length of the shortest path $P$ that visits all the vertices of $S$ and the last vertex is $v$. Formally, $P=(v_{i_1},\dots,v_{i_k})$, $ v_{i_k}=v$, $S\subset\{v_{i_1},\dots,v_{i_k}\}$. If there is no such path, then $D(S,v)=\infty$. Note that the path $P$ is non-simple, and it can visit some vertex from $V\backslash S$.

Let us present an algorithm for computing $D(S,v)$ for each $S\in2^V$ and $v\in S$. It is easy to see that $\title{D}(\{v\},v)=0$ for each $v\in V$. For other pairs $(S,v)$ we compute it using the following statement $D(S,v)=\min\{D(S\backslash\{v\},u)+W[u,v]: u\in S\}$. 

To construct the path itself, we define a function $F:2^{V}\times V\to V\cup\{NULL\}$ such that $F(S,v)$ is the vertex that precedes $v$ in the shortest path that visits all vertices of $S$. Formally,  $F(S,v)=\min\{i:D(S\backslash\{v\},v_i)+W[v_i,v]=D(S,v), (v_i,v)\in E\}$. If there is no such vertex $v_i$, then $F(S,v)=NULL$.
So, we can compute $F(S,v)$ together with $D(S,v)$, $F(S,v)=u$, if $u=argmin\{D(S\backslash\{v\},u)+W[u,v]: u\in S\}$. If $D(S,v)=\infty$, then $F(S,v)=NULL$.

This idea allows us to define a recursive procedure $\textsc{ComputeD}(G,v)$ whose implementation is presented in Algorithm \ref{alg:ComputeD}. 

\begin{algorithm}[H]
\caption{Implementation of $\textsc{ComputeD}(S,v)$}\label{alg:ComputeD}
\begin{algorithmic}
\If{$S=\{v\}$}
        \State $D(S,v)\gets 0$, $F(S,v)\gets NULL$
\Else 
    \State $D(S,v)\gets \infty$, $F(S,v)\gets NULL$
    \For{$u\in S$}
            \If{$D(S\backslash\{v\},u)$ is not computed}
            \State $\textsc{ComputeD}(S\backslash\{v\},u)$
            \EndIf
            \If{$D(S\backslash\{v\},u)+W[u,v]<D(S,v)$}
                \State $D(S,v)\gets D(S\backslash\{v\},u)+W[u,v]$, $F(S,v)\gets u$
            \EndIf        
    \EndFor
\EndIf
\end{algorithmic}
\end{algorithm}

Let us present the procedure $\textsc{GetNSPath}(S,v)$ that returns the path that visits all vertices of $S$ and ends in $v$. The procedure collects the path using $\textsc{GetShortestPath}$ between the vertices obtained from $F$.
The implementation of $\textsc{GetNSPath}(S,v)$ is presented in Algorithm \ref{alg:getpath2}. (See Appendix \ref{apx:getpath2}).

Furthermore, we define a function $C:2^V\to\{0,1\}$ such that $C(S)=1$ iff $V=S\cup\{v:v\in V\backslash S, $ and there is $u\in S$ such that $(u,v)\in E\}$. In other words, $C(S)=1$ if all vertices of $V\backslash S$ are connected to vertices of $S$ by one edge. Let us define a procedure $\textsc{ComputeC}(G)$ that computes the function $C$. For this reason, we compute a set $R=S\cup \bigcup_{v\in S}\{u: u\in \textsc{Neighbors}(v)\}$, and check if $R=V$. The equivalent condition is $|R|=n$. We do it for each set $S\in2^V$. The implementation of the procedure is presented in Algorithm \ref{alg:checkWHP}.

\begin{algorithm}[H]
\caption{Implementation of $\textsc{ComputeC}(G)$}\label{alg:checkWHP}
\begin{algorithmic}
\For{$S\in 2^V$} 
\State $R\gets S$
\For{$v\in S$}
\For{$u\in \textsc{Neighbors}(v)$}
\State $R\gets R\cup\{u\}$
\EndFor
\EndFor
\If{$|R|=n$}
\State $C(S)\gets 1$
\Else
\State $C(S)\gets 0$
\EndIf
\EndFor
\end{algorithmic}
\end{algorithm}


Now we are ready to define the whole algorithm for the (3,2,1)-CP problem. Firstly, we form the functions $D$, and $F$. For each $S$ that satisfies $C(S)=1$, we choose the path $P$ such that
\begin{itemize}
    \item $P=\textsc{GetNSPath}(S,v)$ is the shortest path that visits all the vertices of $S$ for some $v\in S$;
    \item the value $3 len(P)+2|V\backslash S|=3 D(S,v)-2|V\backslash S|=3 D(S,v)-2(n-|S|)$ is minimal.
\end{itemize} Note that $P$ can visit not only the vertices of $S$. That is why we choose the largest $S$ for the shortest path $P$. It visits only vertices from $S$ in that case, the value $3 len(P)+2|V\backslash S|$ is the cost of the corresponding path, and the minimization of this value is the target.

Let $\textsc{ThreeTwoOneCP}(G)$ be the procedure that returns the target path for the  (3,2,1)-CP problem. The implementation of the procedure is presented in Algorithm \ref{alg:wnshpath}. 
The correctness and complexity of the algorithm is discussed in Theorem \ref{th:wnshpath}

\begin{algorithm}[H]
\caption{Implementation of $\textsc{ThreeTwoOneCP}(G)$}\label{alg:wnshpath}
\begin{algorithmic}
\State $\textsc{ShortestPaths}(G)$
\State $\textsc{ComputeC}(G)$
\State $S'\gets \emptyset, v'\gets NULL, cost\gets \infty$ 
\For{$S\in 2^V$}
\For{$v\in S$}
\State $\textsc{ComputeD}(S,v)$

\If{$C(S)=1$}
\If{$cost>3D(S,v)+2(n-|S|)$ or ($cost>3D(S,v)+2(n-|S|)$ and $|S|>|S'|$) }
\State $cost\gets 3D(S,v)+2(n-|S|), S'\gets S, v'\gets v$
\EndIf
\EndIf
\EndFor
\EndFor
\State $P\gets \textsc{GetNSPath}(S',v')$
\State \Return $P$
\end{algorithmic}
\end{algorithm}

\begin{theorem}\label{th:wnshpath}
    The presented algorithm solves the (3,2,1)-CP problem, and the time complexity is $O((m+n)2^n)$. 
\end{theorem}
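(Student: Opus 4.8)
The plan is to separate the theorem into two independent claims — correctness and running time — and dispatch them in that order. The core of correctness is an auxiliary exactness claim for the table $D$: for every $S\subseteq V$ and every $v\in S$, the value $D(S,v)$ returned by $\textsc{ComputeD}$ equals the length of a shortest non-simple path that visits all of $S$ and ends at $v$ (and $\infty$ when none exists). I would isolate this as a lemma, prove it by induction on $|S|$, and then use it to compare the objective minimized by $\textsc{ThreeTwoOneCP}$ with the true optimum of the $(3,2,1)$-CP problem.

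For the inductive claim on $D$, write $\ell(S,v)$ for the true optimum length. The base case $S=\{v\}$ is immediate since the trivial walk has length $0$. For the step I would prove two inequalities. The direction $\ell(S,v)\le D(S,v)$ is easy: each candidate $D(S\setminus\{v\},u)+W[u,v]$ is realized by concatenating an optimal walk for $(S\setminus\{v\},u)$ with a shortest $u$–$v$ path, and this concatenation visits all of $S$ and ends at $v$. The reverse direction is the crux: given an optimal walk $P$ for $(S,v)$, let $u\in S\setminus\{v\}$ be the vertex of $S\setminus\{v\}$ whose \emph{last} occurrence along $P$ is latest. Splitting $P$ there yields a prefix that visits all of $S\setminus\{v\}$ and ends at $u$ (length $\ge D(S\setminus\{v\},u)$ by induction) and a suffix that is a $u$–$v$ walk (length $\ge W[u,v]=dist(u,v)$); summing gives $\ell(S,v)\ge D(S\setminus\{v\},u)+W[u,v]\ge D(S,v)$. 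The matrices $W,A$ from $\textsc{ShortestPaths}$ and the reconstruction in $\textsc{GetShortestPath}$ ensure the concatenations are genuine walks in $G$, so $\textsc{GetNSPath}(S,v)$ returns an actual path of length $D(S,v)$.

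Next I would reduce the optimization. Let $OPT=\min\{cost(Q)=3(len(Q)-1)+2|B(Q)|\}$ over $1$-covering paths $Q$, and $ALG=\min\{3D(S,v)+2(n-|S|):C(S)=1,\ v\in S\}$; I claim $ALG=OPT+3$. For $ALG\le OPT+3$, take an optimal $Q^{*}$ with vertex set $S^{*}$ and endpoint $v^{*}$: then $C(S^{*})=1$, $|B(Q^{*})|=n-|S^{*}|$ exactly, and $len(Q^{*})\ge D(S^{*},v^{*})$, so $(S^{*},v^{*})$ is feasible with objective at most $OPT+3$. For $ALG\ge OPT+3$, the returned pair $(S',v')$ gives $P=\textsc{GetNSPath}(S',v')$, which is $1$-covering since $C(S')=1$; as the vertices on $P$ form a superset of $S'$ we get $|B(P)|\le n-|S'|$, hence $cost(P)\le ALG-3$, forcing $cost(P)=OPT$. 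This step is exactly where the ``choose the largest $S$'' tie-break matters: it guarantees $P$ visits only vertices of $S'$, so $|B(P)|=n-|S'|$ and the reported cost is exact rather than an over-estimate. Finiteness of $OPT$ follows from connectivity, since $S=V$ always satisfies $C(S)=1$.

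Finally, for the running time I would add up the four phases. $\textsc{ShortestPaths}$ costs $O(n^{3})$, dominated by the exponential term. $\textsc{ComputeC}$ spends $O(n)$ on initialization and the cardinality test plus $O(\sum_{v\in S}\deg(v))=O(m)$ on the neighbor scan for each of the $2^{n}$ subsets, giving $O((m+n)2^{n})$. For the dynamic program, the number of memoized states $(S,v)$ with $v\in S$ is $\sum_{S}|S|=n2^{n-1}$, and charging each state's transitions to the neighbor list it scans makes the per-state work telescope to $\sum_{v}\deg(v)\cdot 2^{n-1}=m2^{n}$, so the table fills in $O(m2^{n})$. The concluding $\textsc{GetNSPath}$ is bounded by the path length and absorbed, yielding $O(n^{3}+(m+n)2^{n})=O((m+n)2^{n})$. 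I expect the two main obstacles to be the reverse inequality in the $D$-lemma (arguing the split at the latest last-occurrence so the suffix contains no further required vertex) and, for the time bound, charging each transition against an \emph{edge} rather than against $|S|$, since a naive metric-closure scan over all $u\in S$ would cost $O(n^{2}2^{n})$; meeting the stated bound requires realizing the transitions along the actual edges of $G$ (equivalently, a breadth-first search over the auxiliary state graph on pairs of a subset and a current vertex).
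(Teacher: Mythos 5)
Your proposal is correct, and its skeleton matches the paper's proof: both compare the minimized quantity $3D(S,v)+2(n-|S|)$ against an arbitrary $1$-covering path through its set of visited vertices, and both obtain the running time by summing the costs of \textsc{ShortestPaths}, \textsc{ComputeC}, the dynamic program, and \textsc{GetNSPath}. Where you differ is in filling gaps the paper leaves open. First, the paper simply asserts the recurrence $D(S,v)=\min\{D(S\setminus\{v\},u)+W[u,v]\colon u\in S\}$; your induction, splitting an optimal walk at the latest last occurrence of a vertex of $S\setminus\{v\}$, is exactly the missing exactness lemma and the argument is sound. Second, the paper silently replaces $cost(P)=3(len(P)-1)+2|B(P)|$ by $3\,len(P)+2|B(P)|$ in its contradiction argument; your $ALG=OPT+3$ bookkeeping makes the constant offset explicit, and your observation that the vertices of $\textsc{GetNSPath}(S',v')$ form a superset of $S'$, forcing $cost(P)\le ALG-3\le OPT$, closes the loop more cleanly than the paper's appeal to the largest-$S$ tie-break (which, as your argument implicitly shows, is not even needed for optimality of the returned cost). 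Third, your complexity remark is a genuine catch rather than a pedantic one: the pseudocode of $\textsc{ComputeD}$ scans all $u\in S$ for each state, which costs $\sum_{S}|S|^{2}=\Theta(n^{2}2^{n})$, whereas the paper asserts $O((m+n)2^{n})$ for this phase without justification; for sparse graphs these bounds differ by a factor of $n$, and the stated bound is only attained after rerouting the transitions along actual edges of $G$ (your BFS over subset--vertex states). So your proof is not merely a valid alternative; it supplies the $D$-exactness lemma the paper omits and repairs the part of the complexity claim that the paper's own argument does not support.
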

\begin{proof}
    Let us show the correctness of the algorithm. Suppose that the algorithm finds the shortest path $P$ that visits all vertices of $S$ such that $C(S)=1$, $S$ is the largest for this length of $P$, and the cost is minimal. Assume that there is a $1$-covering path $P'=(v_{i_1},\dots,v_{i_{k'}})$ that has a lower cost than $P$. Let $S'=\{v_{i_1},\dots,v_{i_{k'}}\}$, then $\textsc{GetNSPath}(S',v_{i_{k'}})=P'$. It means $cost(P')=3len(P')+2|V\backslash S'|=3D(S',v_{i_{k'}})+2(n-S')>3D(S,v_{i_{k}})+2(n-S)=cost(P)$ because $P$ has the smallest value $3D(S,v_{i_{k}})+2(n-S)=cost(P)$ among all paths computed by $\textsc{GetNSPath}(S,v)$. This claim contradicts the assumption $cost(P)>cost(P')$.

    The procedure $\textsc{ComputeD}$ is invoked once for each subset $S\in 2^V$ and vertex $v\in V$. The time complexity of all invocations of the procedure is $O((m+n)\cdot 2^n)$.  
    The time complexity of the \textsc{ShortestPaths} procedure is $O(n^3)$.
    The time complexity for the procedure $\textsc{ComputeC}$ is $O((m+n)2^n)$ because we check all subsets $S\in 2^V$ and check at most $m$ edges of the graph for each subset.
    
    The complexity of $\textsc{GetNSPath}$ is $O(n)$ because the maximal length of the path is $2n$ due to Lemma \ref{lm:len-wnsh}.

    So, the total complexity is $O(n^3+(m+n)\cdot 2^n+(m+n)\cdot 2^n+n)=O((m+n)2^n)$.
\end{proof}

\subsection{Approximate Algorithm for (3,2,1)-Covering Path Problem}

We are planning to use the solution of the problem for optimization of a circuit for the QFT algorithm. So for big $n$, the current solution is too slow.

Due to the strong connection of the (3,2,1)-CP problem with the Travelling salesman problem (TSP) and the Shortst covering path problem (SCPP), we can use heuristic algorithms, for example, Ant colony optimization \cite{dg1997}, or greedy algorithms like \cite{jm1997} that are used for TSP or algorithms used for SCPP \cite{cpr1994}. 

Here we present a fast approximate solution to the problem that can be used for practical applications.

Let us define two subtasks.
\begin{itemize}
    \item The Connected Dominating Set problem (CDS problem). For a given graph $G=(V,E)$, we want to find a connected set $S$ of minimal size such that $V=S\cup B$, where $B=\{u: u\in \textsc{Neighbors}(v)$ for some $v\in S\}$. Informally, each vertex of the graph either belongs to $S$ or is connected to a vertex from $S$ by one edge.
    \item  For a given weighed graph $G'=(V',G')$, the shortest non-simple path that visits all vertices of the graph at least once.
\end{itemize}

The first problem can be solved using a $(ln \Delta + 3)$-approximating algorithm from \cite{gk1998}, where $ \Delta = max\{|\textsc{Neighbors}(v)|: v\in V\}$ is the maximal number of neighbors of a vertex from $V$. Here, $\alpha$-approximating algorithm means that the result is at most $\alpha$ times bigger than the solution. The properties of the algorithm are described in the following lemma.
\begin{lemma}[\cite{gk1998}]\label{lm:cds}
    There is an $(ln \Delta + 3)$-approximate algorithm for the CDS problem. The time complexity of the algorithm is $O((n+m)\log n)$
\end{lemma}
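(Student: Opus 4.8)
Since the statement is quoted directly from \cite{gk1998}, the plan is to reconstruct the two-phase greedy construction of Guha and Khuller and explain why it attains the claimed ratio and running time. The CDS problem asks for a minimum connected set $S$ dominating $V$, so the two natural ingredients are (i) producing a small dominating set and (ii) connecting it cheaply; I would separate the argument along these lines and analyze each phase on its own before combining.

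Phase 1 (domination). First I would compute a dominating set $D$ by the greedy set-cover heuristic: color every vertex \emph{white} initially and repeatedly select the vertex whose closed neighborhood contains the largest number of white vertices, recoloring those vertices. Since each closed neighborhood has size at most $\Delta+1$, the standard set-cover potential argument yields $|D|\le H(\Delta+1)\cdot\gamma(G)$, where $\gamma(G)$ is the minimum dominating-set size and $H(k)=\sum_{i=1}^{k}1/i\le \ln k+1$. Because every connected dominating set is in particular a dominating set, $\gamma(G)\le opt$, hence $|D|\le(\ln\Delta+1)\,opt$ up to a lower-order term absorbed into the additive constant.

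Phase 2 (connection). The set $D$ splits into some number $c$ of connected components of $G$. The key structural observation is that, since $D$ dominates the connected graph $G$, any two components of $D$ are joined by a path using at most two vertices outside $D$; hence one can merge all $c$ components into a single tree by adding at most $2(c-1)$ extra vertices, chosen greedily by repeatedly bridging two components through such a short path (a union-find bookkeeping suffices to track merges). This produces a connected dominating set of size $|D|+2(c-1)$.

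Combining the bounds. The remaining---and hardest---step is to control the component count $c$ against $opt$ so that the additive connection cost does not inflate the ratio. Following \cite{gk1998}, I would either charge the $2(c-1)$ connectors against the optimal solution, using that $opt$ itself connects everything, or refine the greedy rule so that, when profitable, it selects a \emph{pair} (a dominator together with a bridging vertex) in a single step, which lets the same set-cover potential simultaneously pay for domination and for the reduction in the number of components. Either route tightens $|D|+2(c-1)$ to $(\ln\Delta+3)\,opt$, the $+3$ absorbing the $+1$ from $H$ together with the factor-$2$ connection overhead. For the running time, Phase 1 is implemented with a bucket or priority structure keyed on the current white-degree of each vertex, so that each edge triggers only $O(\log n)$ updates and Phase 1 runs in $O((n+m)\log n)$; Phase 2 is a single $O(n+m)$ merge pass, giving the stated $O((n+m)\log n)$ total. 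I expect the bottleneck of the write-up to be exactly the connection-phase analysis of this paragraph, namely bounding $c$ and rigorously proving the two-vertices-per-bridge claim.
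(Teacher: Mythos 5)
This lemma is not proved in the paper at all: it is imported as a black box from \cite{gk1998} (Guha and Khuller), and the paper only uses it as Step~1 of the approximate (3,2,1)-CPP algorithm. So there is no in-paper argument to compare yours against; what can be assessed is whether your reconstruction actually establishes the stated bound. The skeleton you describe is indeed the right one (it is Guha--Khuller's second, two-phase algorithm: greedy domination followed by a connection phase), and your Phase~1 bound $|D|\le H(\Delta+1)\,\gamma(G)\le(\ln\Delta+1)\,opt$ and the ``at most two bridge vertices per merge'' claim (components of a dominating set in a connected graph are pairwise within distance three) are both correct and provable as written.

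The genuine gap is exactly where you flag it, and it is not a cosmetic one: with only the bounds you actually establish, the best you can conclude is $|D|+2(c-1)\le|D|+2(|D|-1)\le 3(\ln\Delta+1)\,opt$, i.e.\ a ratio of roughly $3\ln\Delta+3$, not $\ln\Delta+3$. Getting the additive (rather than multiplicative) constant requires the step you only gesture at: either showing that the entire connection phase can be charged $2\cdot opt$ in total (because the optimal connected dominating set itself witnesses short bridges between pieces), or running the modified greedy that selects a vertex \emph{or} an adjacent pair in each round and reproving the set-cover potential argument for that rule so that one potential function simultaneously pays for uncovered vertices and for the number of pieces. Neither route is carried out, so as written the proposal does not prove the claimed ratio. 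For the purposes of this paper none of this is needed --- citing \cite{gk1998} suffices, and note that the constant in the ratio only affects the quality of the \emph{approximate} circuit-construction heuristic, not the correctness of any circuit it outputs; but if you do want a self-contained proof, the missing charging argument is the entire difficulty. The running-time analysis ($O((n+m)\log n)$ via a heap keyed on white-degree, plus a linear merge pass) is fine.
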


The second problem can be solved by the Christofides–Serdyukov algorithm analogue \cite{c2022,s1978,vs2020}. Let us consider a spanning tree of the graph $G=(V,E)$. It is a tree $T=(V,E')$, where $E'\subset E$. 
We can construct a non-simple path $P$ that is the Euler tour \cite{cormen2001} of the tree $T$. The path visits all the vertices of the graph $G$, but possibly it is not the shortest. The length of the path is $2|V|-2$. The length of the minimal possible path that visits all vertices is at least $|V|-1$.  So, the algorithm gives us at most $2$ times longer path. The solution is a $2$-approximating solution to the second problem.
\begin{lemma}\label{lm:tcsns}
    The time complexity of the presented $2$-approximate algorithm for searching the shortest non-simple path that visits all vertices of the graph at least once is $O(|V|+|E|)$
\end{lemma}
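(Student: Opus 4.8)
The plan is to bound the running time of each of the two constituent steps of the algorithm separately and then sum them, since the claimed approximation ratio has already been established in the preceding discussion and only the time complexity remains to be shown.

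First I would analyze the spanning-tree construction. Since $G=(V,E)$ is connected, a spanning tree $T=(V,E')$ can be produced by a single traversal of the graph, for instance by Breadth First Search or Depth First Search \cite{cormen2001}, recording for each newly discovered vertex the edge through which it was first reached. Such a traversal visits every vertex once and inspects every edge a constant number of times, so this step runs in $O(|V|+|E|)$ time and outputs a tree with exactly $|V|$ vertices and $|V|-1$ edges.

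Next I would bound the Euler-tour construction on $T$. Because $T$ is a tree on $|V|$ vertices, it has only $|V|-1$ edges, and the Euler tour traverses each edge exactly twice, yielding a walk of length $2|V|-2$. This walk is obtained by a single depth-first traversal of $T$ that appends each vertex to the output sequence upon entering and re-entering it; the work done is proportional to the number of edge traversals, namely $O(|V|)$. Adding the two contributions gives $O(|V|+|E|) + O(|V|) = O(|V|+|E|)$, which is the claimed bound.

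I do not expect a genuine obstacle here, as every sub-step is a standard linear-time graph routine; the only point requiring a little care is to observe that the Euler tour is computed on the \emph{tree} $T$ rather than on the original graph $G$, so its cost is governed by $|V|$ (the tree has $|V|-1$ edges) and does not reintroduce an $|E|$ term beyond the one already incurred by the spanning-tree step. Thus the spanning-tree construction dominates, and the overall complexity is $O(|V|+|E|)$.
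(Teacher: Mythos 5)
Your proposal is correct and follows essentially the same route as the paper's proof: build a spanning tree by a linear-time traversal and then take an Euler tour of that tree, each step costing $O(|V|+|E|)$. Your added observation that the Euler tour runs on the tree and hence costs only $O(|V|)$ is a minor sharpening, not a different argument.
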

\begin{proof}
    The spanning tree can be constructed using the depth-first search algorithm with $O(|V|+|E|)$ time complexity \cite{cormen2001}. The Euler tour \cite{cormen2001} can also be done with $O(|V|+|E|)$ time complexity.
\end{proof}

So, the whole algorithm is two steps: 
\begin{itemize}
    \item Step 1. Constructing the smallest connected domain $S$ of the graph $G$. Then consider the subgraph $G(S)=(S,E(S))$, where $E(S)\subset E$ are the edges of $G$ that connect only the vertices from $S$. We use the $(ln \Delta + 3)$-approximate algorithm from Lemma \ref{lm:cds}.
    \item Step 2. We construct a path that visits all vertices at least once in the graph $G(S)$.  We use the $2$-approximate algorithm from Lemma \ref{lm:tcsns}.
\end{itemize}

We claim that the presented algorithm solves the (3,2,1)-CP problem and it is a $2(ln \Delta + 3)$-approximate algorithm.

\begin{theorem}
    The presented algorithm solves the (3,2,1)-CP problem, it is a $2(ln \Delta + 3)$-approximate algorithm, and the time complexity is $O((n+m)\log n)$.
\end{theorem}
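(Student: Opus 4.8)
The plan is to establish the three claims—validity of the returned path, the approximation ratio, and the running time—separately, using Lemmas~\ref{lm:cds} and~\ref{lm:tcsns} as black boxes.

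\textbf{Validity.} First I would argue that the output is a genuine $1$-covering path. By Lemma~\ref{lm:cds}, Step~1 returns a \emph{connected} dominating set $S$: every $v\in V\setminus S$ has a neighbour in $S$, and the induced subgraph $G(S)$ is connected. Hence, by Lemma~\ref{lm:tcsns}, Step~2 produces a non-simple path $P$ inside $G(S)$ that visits every vertex of $S$ at least once, such a path existing precisely because $G(S)$ is connected. Since the visited set of $P$ is exactly $S$ and $S$ dominates $V$, every vertex of $V$ is either visited by $P$ or adjacent to a visited vertex; that is $R(P)=V$, so $P$ is a $1$-covering path with $B(P)=V\setminus S$.

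\textbf{Approximation ratio.} The key structural observation is that the visited set $S^\ast$ of an \emph{optimal} $(3,2,1)$-covering path $P^\ast$ is itself a connected dominating set: $P^\ast$ covers $V$, so $S^\ast$ dominates $V$, and consecutive vertices of $P^\ast$ are adjacent, so $G(S^\ast)$ is connected. Consequently $|S^\ast|\ge c^\ast$, where $c^\ast$ denotes the size of a minimum connected dominating set. Combining this with $len(P^\ast)\ge|S^\ast|-1$ I would derive a lower bound of the form $cost(P^\ast)=3(len(P^\ast)-1)+2|V\setminus S^\ast|\ge c^\ast+2n-O(1)$. For the upper bound, Step~1 gives $|S|\le(\ln\Delta+3)c^\ast$ and Step~2 gives $len(P)\le 2|S|-2$, whence $cost(P)\le 4|S|+2n-O(1)$. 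Composing the factor $(\ln\Delta+3)$ from the dominating-set step with the factor $2$ from the spanning-path step then yields $cost(P)\le 2(\ln\Delta+3)\,cost(P^\ast)$.

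\textbf{Main obstacle.} The delicate point is the additive constants together with the $2|V\setminus S|$ term, which does not scale multiplicatively with either subtask: a larger $S$ makes the path longer but the boundary $V\setminus S$ smaller. I expect to handle this by observing that both costs are dominated by the common $2n$ contribution (every covering path must pay for covering all $n$ vertices), and by capping $|S|\le n$ in the chain $|S|\le(\ln\Delta+3)c^\ast$, which is essential exactly in the extreme case $c^\ast=n$. With these two observations the multiplicative factors of the two steps compose cleanly and the stray additive constants are absorbed into the $2n$ term.

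\textbf{Running time.} Finally, Step~1 runs in $O((n+m)\log n)$ by Lemma~\ref{lm:cds}, extracting $G(S)$ costs $O(n+m)$, and Step~2 runs in $O(|S|+|E(S)|)=O(n+m)$ by Lemma~\ref{lm:tcsns}, so the total time is $O((n+m)\log n)$.
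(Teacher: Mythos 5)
Your proposal is correct and follows essentially the same route as the paper: lower-bound the optimal cost by $|S_d|+2n$ via the observation that the visited set of an optimal covering path is itself a connected dominating set, upper-bound the algorithm's output using the $(\ln\Delta+3)$ and $2$ factors from the two subroutines, and absorb the non-multiplicative $2n$ boundary term into the common additive contribution. Your explicit validity check (that the output really is a $1$-covering path) is a small addition the paper leaves implicit, but the decomposition, key lemmas, and running-time accounting coincide with the paper's proof.
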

\begin{proof}
Let us consider the solution $P=(v_{i_1},\dots,v_{i_k})$ for the (3,2,1)-CP problem for some graph $G=(V,E)$. The set $S=\{v_{i_1},\dots,v_{i_k}\}$ is the set of vertices visited by $P$. Note that all vertices of the graph are either belongs to $V$ or connected to a vertex from $S$ with one edge. Let $S_d$ be the solution of the CDS problem for the graph. Therefore, the size $|S|\geq |S_d|$.

The cost of the path $cost(P)=3len(P)+2|V\backslash S|\geq 3|S|+|V\backslash S|=|S|+2|V|=|S|+2n\geq |S_d|+2n$.

Let us consider the solution obtained by the approximate solution to the problem.

Let $S'_d$ be the approximate solution of the first part (to the CDS problem). So, $|S'_d|\leq (ln\Delta +3)|S_d|$.

Let the path $P'$ be the approximate solution of the second part (the shortest non-simple path that visits all vertices of $S'_d$ at least once). The length of the path is $len(P')\leq 2|S'_d|\leq 2(ln\Delta +3)|S_d|$.

The cost of the path $cost(P')=3len(P')+2|V\backslash S'_d|\leq 2(ln\Delta +3)|S_d| +2n-2|S'_d|\leq 2(ln\Delta +3)|S_d| +2n-2|S_d|=2(ln\Delta +2)|S_d| +2n\leq 2(ln\Delta +2)|S_d| +2(ln\Delta +2)\cdot 2n=2(ln\Delta +2)(|S_d|+2n)$.

So, we can say, that $cost(P')\leq 2(ln\Delta +2)(|S_d|+2n)$, and $cost(P)\geq (|S_d|+2n)$. Therefore, $cost(P')\leq cost(P)\cdot 2(ln\Delta +2)$.

The time complexity of the solution is $O((n+m)\log n)$ for the first part, and $O(|S_d'|+E(S_d'))=O(n+m)$ for the second part. The total time complexity is $O((n+m)\log n)$.
\end{proof}


\section{Method for Constructing a Circuit for Quantum Fourier Transform}\label{sec:qft}

Let us consider a quantum device with some qubit connectivity graph $G=(V,E)$. We assume that $G$ is a connected graph. Here we present a method that allows us to construct a circuit that implements the Quantum Fourier Transform (QFT) algorithm on this device. More information on the QFT algorithm can be found in Appendix \ref{apx:qft}. If we do not have restrictions for applying two-qubit gates (when $G$ is a complete graph, for instance), then the circuit is presented in Figure \ref{fig:cqft}.
\begin{figure}[H]
\includegraphics[width=0.5\textwidth]{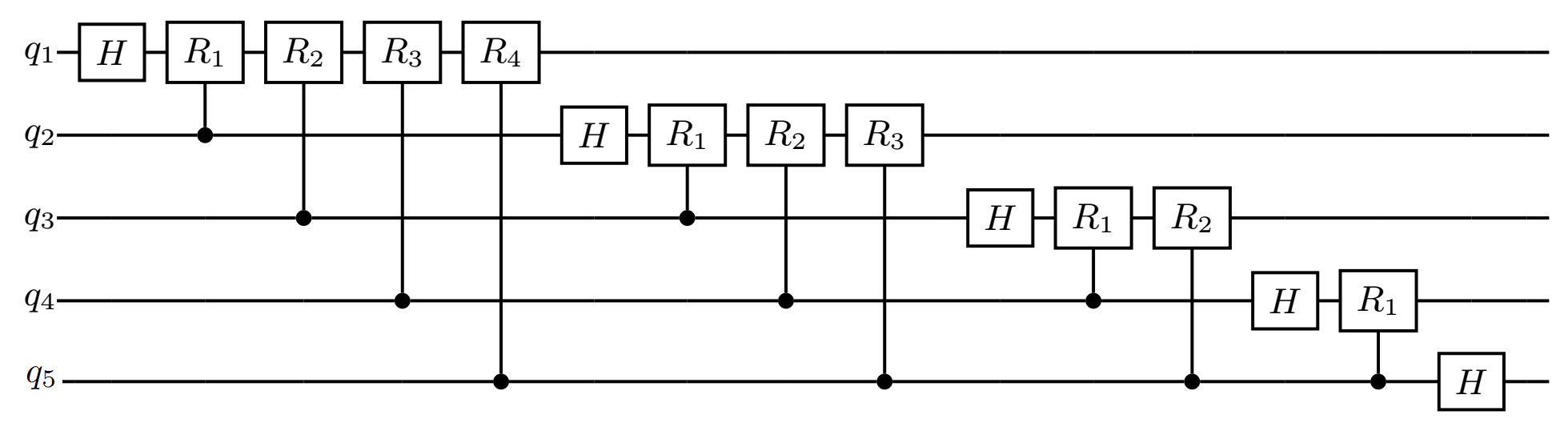}
\caption{\label{fig:cqft} A quantum circuit for Quantum Fourier Transform algorithm for fully connected $5$ qubits}
\end{figure}

We can split the circuit for the QFT algorithm into a series of control phase gates cascades depending on the target qubit for control phase operations. The $r$-th cascade uses $q_r$ as the target qubit (Figure \ref{fig:qft-cascade}).

\begin{figure}[H]
\includegraphics[width=0.5\textwidth]{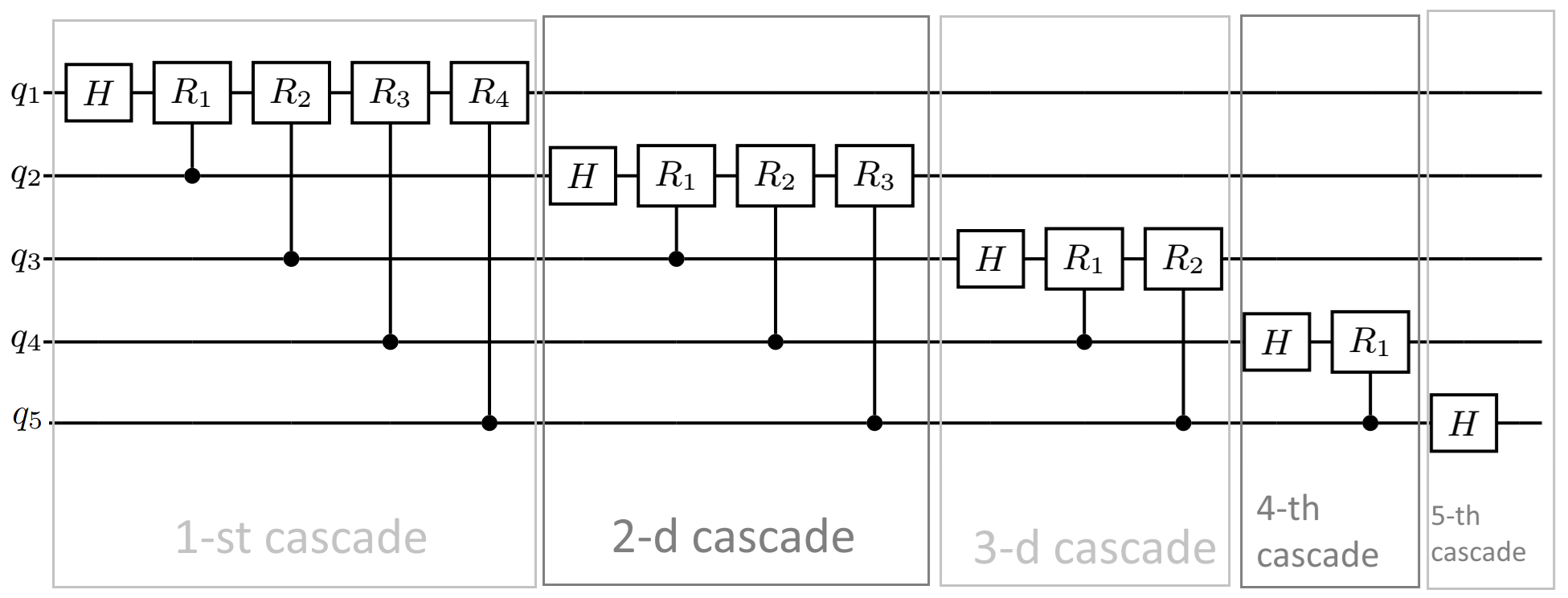}
\caption{\label{fig:qft-cascade} A quantum circuit for Quantum Fourier Transform algorithm for fully connected $5$ qubits splited to $5$ cascades depending on the target qubit.}
\end{figure}

Assume that we have a $\textsc{CascadeForPath}(P,r)$ procedure that constructs the $r$-th cascade of the circuit for the QFT algorithm for a path $P$. 
Here $P$ is a path that ``covers'' only vertices corresponding to the qubits used in the current cascade. We say that a path covers a vertex if the vertex is visited by the path or the vertex is connected by an edge with some vertex from the path. Because we can apply two-qubit gates only for adjacent vertices, the procedure moves the target qubit by the path $P$ from the first vertex of the $P$ to the last one. We move the target qubit using the SWAP gate. During the ``travel'' of the target qubit, we apply the control phase operator to each neighbor vertex. Because the path $P$ covers all the vertices that correspond to the cascade. This strategy allows us to implement the cascade. In the end of the ``travel'', we move the target qubit to one of the neighbors of the last vertex of $P$ and exclude it from the next steps because it does not participate in rest cascades.  

Firstly, we present the main algorithm in Section \ref{sec:qft1}. Then we present the detailed algorithm for the $\textsc{CascadeForPath}(P,r)$ procedure in Section \ref{sec:qft2}. After that we discuss the complexity of the circuit in Section \ref{sec:qft3}. Finally, we compare the circuit with existing results in Section \ref{sec:qft4}. 

\subsection{The Main Algorithm}\label{sec:qft1}
Let us present the entire algorithm for constructing the quantum circuit for the QFT algorithm. 

\subsubsection{Vertices and Qubits Correspondence}
Firstly, we should assign logical qubits to the vertices. Consider two sequences:
\begin{itemize}
    \item $A_1,\dots, A_n$ are the indexes of initial positions of qubits. If $A_i=j$ on some step, it means that the vertex $v_i$ contains a logical qubit that was in $v_j$ before starting the algorithm.
    \item $S_1,\dots S_n$ are the final positions of the qubits. If $S_i=j$, then the $j$-th logical qubit is located in the vertex $v_i$ before starting the algorithm.
\end{itemize}

Our main goal is to compute the sequence $S_1,\dots S_n$. Let us present the algorithm.

\begin{itemize}
    \item[] \textbf{Step 0.} We assign $A_i\gets i$ for each $i\in\{1,\dots,n\}$. Let $r\gets 1$ be the number of a cascade.
    \item[] \textbf{Step 1.} We find a (3,2,1)-covering path $P_r=(v_{i_1},\dots,v_{i_k})$.
    \item[] \textbf{Step 2.} We assign $S_{A_{i_1}}\gets r$ 
    \item[] \textbf{Step 3.} We move the first element by the path, i.e. we swap $A_{i_j}$ and $ A_{i_{j+1}}$ for $j\in\{1,\dots,k-1\}$.
    \item[] \textbf{Step 4.} We choose a neighbor vertex $v_q$ of $v_{i_k}$ with the maximal index that is not visited by the path $P$. Then we assign $A_{i_k}\gets A_{q}$, and we exclude the vertex $v_q$ from the graph\footnote{In fact, we do not exclude it, but mark as excluded. After invocation of this algorithm, we should be able to restore the whole graph.}.
    \item[] \textbf{Step 5.} We go to the next cascade $r\gets r+1$. If $r\leq n-2$, then we go to Step 1, and go to Step 6 otherwise.
    \item[] \textbf{Step 6.} In this step, we have two vertices in the graph that are not excluded and connected. Assume that there are $v_q$ and $v_t$, and $q<t$. Then, we assign $S_{A_q}\gets n-1$, and $S_{A_t}\gets n$.  
\end{itemize}

The implementation of the algorithm is presented in Algorithm \ref{alg:compute-s}. (See Appendix \ref{apx:compute-s}).

\subsubsection{The Algorithm}
The enumeration $S$ is such that the algorithm works well, and the algorithm for computing $S$ is very similar to the main algorithm.

First, we restore the graph. Then, on each cascade, the $r$-th logical qubit is located at the starting vertex of the path $P_r$. For each cascade, we move the $r$-th logical qubit by the path $P_r$ using the SWAP gate and then to the neighbor of the last vertex of the path with the maximal index. After that, we exclude the qubit from the graph. 

We use $Q_i$ as the current position of the $i$-th logical qubit and $T_j$ as an index of logical qubit located in the vertex $v_j$. Initially $T_j\gets S_j$, $Q_{T_j}\gets j$ for each $j\in\{1,\dots, n\}$.

The construction of a cascade is presented by the procedure $\textsc{CascadeForPath}(P_r,r)$. The algorithm is as follows.

\begin{itemize}
    \item[] \textbf{Step 0.} We associate the $S_j$-th logical qubit with the vertex $v_j$, i.e. $T_j\gets S_j$, $Q_{T_j}\gets j$, for $j\in\{1,\dots,n\}$. 
    
    Let $r\gets 1$ be the number of a cascade.
    
    \item[] \textbf{Step 1.} We construct the $r$-the cascade using $\textsc{CascadeForPath}(P_r,r)$ and keep the $T$ and $Q$ indexes actual.
    \item[] \textbf{Step 2.} We choose a neighbor vertex $v_q$ of $v_{i_k}$ with the maximal index that is not visited by the path $P$ and exclude it because the $r$-th qubit was moved there during the $\textsc{CascadeForPath}(P_r,r)$ procedure.
    \item[] \textbf{Step 3.} We go to the next cascade $r\gets r+1$. If $r\leq n$, then we go to Step 1, and stop otherwise.
\end{itemize}

The implementation of the algorithm is presented in Algorithm \ref{alg:qft-whole}.  Assume that the $\textsc{ConstructS}(G)$ procedure contains Algorithm \ref{alg:compute-s}.
\begin{algorithm}[H]
\caption{Implementation of the algorithm of computing the sequence of indexes $S_1,\dots,S_n$.}\label{alg:compute-s}
\begin{algorithmic}
\For{$j\in\{1,\dots,n\}$}
\State $A_j\gets j$
\EndFor
\For{$r\in\{1,\dots,n-2\}$}
\State $(i_1,\dots,i_k)=P_r\gets \textsc{ThreeTwoOneCP}(G)$
\State $S_{A_{i_1}}\gets r$
\For{$j\in\{1,\dots,k-1\}$}
\State $x\gets A_{i_j}$, $ A_{i_j}\gets A_{i_{j+1}}$, $A_{i_{j+1}}\gets x$
\EndFor
\State $q=\max\{j:$ $v_j$ is not excluded,$ v_j\in \textsc{Neighbors}(v_{i_k}), j\neq i_{k-1}\}$
\State $A_{i_k}\gets A_q$
\State exclude $v_q$ from the graph.
\EndFor
\State $v_q$ and $v_t$ are two not excluded vertexes, and $q<t$
\State $S_{A_q}\gets n-1$, $S_{A_t}\gets n$
\State $P_{n-1}=(q)$, $P_{n}=()$
\end{algorithmic}
\end{algorithm}

\begin{algorithm}[H]
\caption{Implementation of the algorithm of constructing the whole circuit for QFT}\label{alg:qft-whole}
\begin{algorithmic}
\State $\textsc{ConstructS}(G)$
\For{$j\in\{1,\dots,n\}$}
    \State $T_j\gets S_j$
    \State $Q_{T_j}\gets j$
\EndFor
\For{$r\in\{1,\dots,n\}$}
    \State $\textsc{CascadeForPath}(P_r,r)$
    \State $P_r=(i_1,\dots,i_k)$
  \State $q=\max\{j:$ $v_j$ is not excluded,$ v_j\in \textsc{Neighbors}(v_{i_k}), j\neq i_{k-1}\}$
    \State exclude $v_q$ from the graph.
\EndFor
\end{algorithmic}
\end{algorithm}

Let us discuss the time complexity of the algorithm.

\begin{theorem}\label{th:qft-whole}
    The time complexity of Algorithm \ref{alg:qft-whole} is $O((m+n)2^n)$ in the case of exact solution and $O(mn\log n+n^2\log n)$ in the case of approximate solution.
\end{theorem}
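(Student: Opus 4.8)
The plan is to decompose Algorithm \ref{alg:qft-whole} into three parts — the single call to $\textsc{ConstructS}(G)$ (Algorithm \ref{alg:compute-s}), the initialization loop setting up $T$ and $Q$, and the main loop of $n$ cascades — and to bound each part separately, showing that the total is governed by the repeated calls to $\textsc{ThreeTwoOneCP}$ made inside $\textsc{ConstructS}$.

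First I would analyze $\textsc{ConstructS}$, which runs $n-2$ cascades. The dominant operation in each cascade is the call to $\textsc{ThreeTwoOneCP}(G)$; the remaining per-cascade work (initializing $A$, swapping consecutive entries of $A$ along the path, and scanning $\textsc{Neighbors}(v_{i_k})$ to find $q$) is $O(n+m)$ and contributes only $O(n^2+nm)$ over all cascades. The crucial structural fact is that before cascade $r$ we have already excluded $r-1$ vertices, so the active graph has $n_r=n-r+1$ vertices and at most $m$ edges, and $\textsc{ThreeTwoOneCP}$ enumerates only subsets of the currently active vertices.

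The main obstacle — and the heart of the argument — is to avoid the naive estimate of $n$ times the worst-case cost of a single call. By Theorem \ref{th:wnshpath} the $r$-th exact call costs $O((m+n_r)2^{n_r})$, so summing over cascades and using $n_r=n-r+1$ gives
\[
\sum_{r=1}^{n-2} O\bigl((m+n)2^{\,n-r+1}\bigr) = O(m+n)\sum_{r=1}^{n-2}2^{\,n-r+1} = O\bigl((m+n)2^n\bigr),
\]
since the geometric series is dominated by its largest term $2^n$. This shrinking-graph observation is what collapses the exponential factors, and verifying that the subset enumeration really ranges only over the active vertices is the step I expect to demand the most care. For the approximate variant there is no exponential decay to exploit: each of the $n-2$ calls costs $O((m+n)\log n)$, and they simply multiply to $O(n(m+n)\log n)=O(mn\log n+n^2\log n)$.

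Finally I would bound the main loop of Algorithm \ref{alg:qft-whole}. The initialization loop is $O(n)$, and each $\textsc{CascadeForPath}(P_r,r)$ walks the stored path $P_r$, whose length is $O(n)$ by Lemma \ref{lm:len-wnsh}, performing a bounded number of steps per visited vertex together with control-phase operations on its neighbors, for $O(n+m)$ per cascade and $O(n^2+nm)$ in total; the neighbor scans for $q$ add only $O(n+m)$. All of this is dominated by both target bounds, so the overall complexity is $O((m+n)2^n)$ in the exact case and $O(mn\log n+n^2\log n)$ in the approximate case, as claimed.
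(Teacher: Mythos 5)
Your proposal is correct and follows essentially the same route as the paper: bound the $n-2$ calls to $\textsc{ThreeTwoOneCP}$ on graphs of shrinking size, collapse the resulting geometric series $\sum_r (m+n)2^{n-r+1}$ to $O((m+n)2^n)$ in the exact case, multiply out $n$ calls of $O((m+n)\log n)$ in the approximate case, and observe that the remaining work (initialization and the cascade construction) is lower order. Your accounting of the non-dominant parts is slightly more careful than the paper's ($O(n^2+nm)$ rather than $O(n^2)$), but this does not change the stated bounds.
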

\begin{proof}
The procedure \textsc{ConstructS}() invokes the algorithm for searching the (3,2,1)-covering path in the graphs of sizes $n,n-1,\dots, 1$.
In the case of an exact solution, the complexity of the procedure is at most \[O((m+n)2^n+(m+n-1)2^{n-1}+\dots+(m+n-n+1)2^{n-n+1})=O((m+n)\sum_{r=1}^{n}2^{r})=O((m+n)2^n).\]

In the case of an approximate solution, the complexity of the procedure is at most \[O((m+n)\log n+(m+n-1)\log (n-1)+\dots+(m+n-n+1)=O((m+n)\log n \cdot\sum_{r=1}^{n}r)=O((m+n)n\log n)=O(mn\log n + n^2\log n).\] 

The complexity of the rest part is at most $O(n^2)$.
So, the total complexity is $O((m+n)2^n+n^2)=O((m+n)2^n)$ in the case of exact solution; and $O(mn\log n + n^2\log n)$ in the case of the approximate solution.
\end{proof}

\subsection{Quantum Circuit for One Cascade}\label{sec:qft2}

Let us present the algorithm for generating a quantum circuit for the $r$-th cascade, that is the procedure $\textsc{CascadeForPath}(P,r)$.

In the $r$-th cascade, we use the $r$-th qubit as a target for the control phase gates. Due to the enumeration of qubits, it is located in the vertex $v_{i_1}$, where $P=(i_1,\dots,i_k)$. 

We move the target qubit by the path $P$ and for each position of the target qubit, we apply control phase gates for each neighbor vertex. Finally, we move the target qubit to the neighbor of $v_{i_k}$ with the maximal index. For refusing repetition of applying of a control phase gate for a control qubit, we use a set $U$ that stores all qubits that have already been used as control qubits during this cascade.

The algorithm for constructing a quantum circuit is as follows.

\begin{itemize}
    \item[] \textbf{Step 1.} We start with the first qubit in the path $j\gets 1$, and initialize $U\gets\emptyset$. We apply the Hadamard transformation to the qubit corresponding to the vertex $v_{i_1}$. We denote this action by $\textsc{H}(v_{i_1})$. If $k=1$, then we terminate our algorithm; otherwise, go to Step 2. 
      \item[]  \textbf{Step 2.} For each $v_t\in \textsc{Neighbors}(v_{i_j})\backslash\{v_{i_{j+1}}\}$, if $v_t\not\in U$, then we apply the control phase gate $CR_d$ with the control $v_{t}$ and the target $v_{i_j}$ qubits, where $d=T_{t}-r$. Note that $v_t$ with the maximal index should be processed in the end.      
      Then, we add $v_{t}$ to the set $U$, i.e. $U\gets U\cup\{v_{t}\}$. If $j=k$, then we go to Step 5, and to Step 3 otherwise.
        \item[]  \textbf{Step 3.} If $v_{i_{j+1}}\not\in U$, then we apply the control phase gate $CR_d$ with the control $v_{i_{j+1}}$ and the target $v_{i_j}$ qubits, where $d=T_{i_{j+1}}-r$. Then, we add $v_{i_{j+1}}$ to the set $U$, i.e. $U\gets U\cup\{v_{i_{j+1}}\}$. After that, we go to Step 4.
     \item[]  \textbf{Step 4.} We apply the SWAP gate to $v_{i_{j}}$ and $v_{i_{j+1}}$, and swap the indexes of qubits for these vertices. In other words, if $w_1=T_{i_j}$ and $w_2=T_{i_{j+1}}$ are indexes of the corresponding logical qubits, then we swap $Q_{w_1}$ and $Q_{w_2}$ values, and $T_{i_j}$ and $T_{i_{j+1}}$ values. Then, we update $j\gets j+1$ because the value of the target qubit moves to $v_{i_{j+1}}$. Then, we go to Step 2.
     \item[] \textbf{Step 5.} If $j=k$, then we apply the SWAP gate to $v_{i_{j}}$ and $v_{q}$, and swap the qubit indexes for these vertices similarly to Step 4. Here $v_q$ is the neighbor of $v_{i_{j}}$ with the maximal index, i.e. $q=\max\{j:$ $v_j$ is not excluded,$ v_j\in \textsc{Neighbors}(v_{i_k}), j\neq i_{k-1}\}$
\end{itemize}

Finally, we obtain the $\textsc{CascadeForPath}(P,r)$ procedure whose implementation is presented in Algorithm \ref{alg:qft-path} (see Appendix \ref{apx:qft-path}). This procedure constructs the $r$-th part (cascade) of the circuit for QFT for the path $P$.

\subsection{The CNOT cost of the Circuit}\label{sec:qft3}
Note that the $CR_d$ gate can be represented using only two CNOT gates and three $R_z$ gates \cite{barenco1995elementary} (see Figure \ref{fig:cp}).

\begin{figure}[H]
\includegraphics[width=0.6\textwidth]{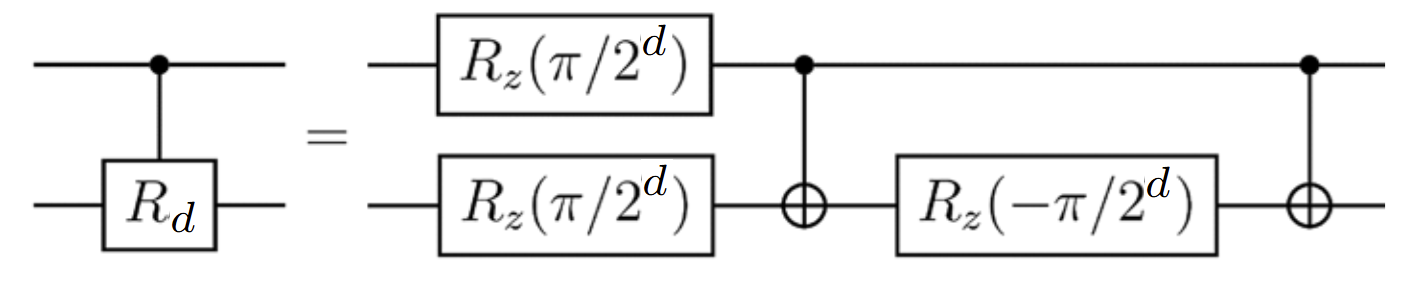}
\caption{\label{fig:cp} Representation of $CR_d$ gate using only basic gates}
\end{figure}  
A pair of $CR_d$ and $SWAP$ gates can be represented using three CNOT gates (see Figure \ref{fig:crswap}).

\begin{figure}[H]
\includegraphics[width=0.6\textwidth]{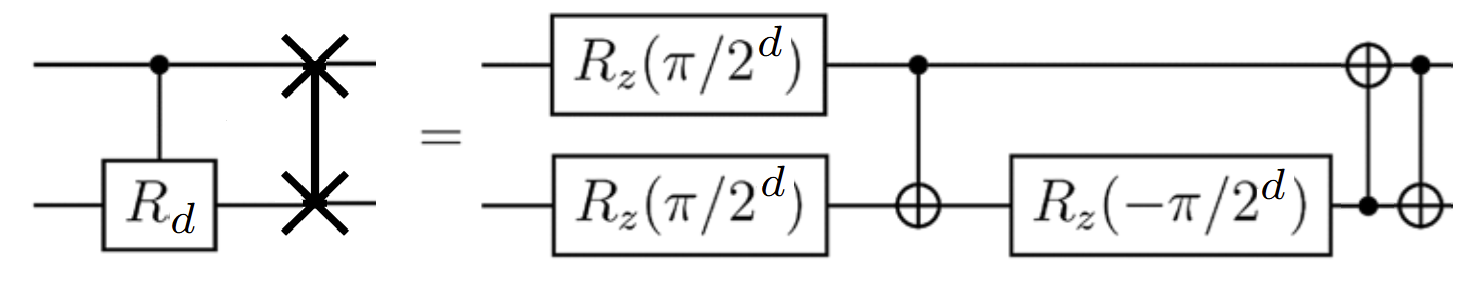}
\caption{\label{fig:crswap} Reduced representation of a pair $CR_d$ and $SWAP$ gates using only basic gates}
\end{figure}  

Let us discuss the CNOT cost of the algorithm in the next theorem.
\begin{theorem}\label{th:qft}
    The CNOT cost of the circuit that is generated using Algorithm \ref{alg:qft-whole} is at most $K+n^2-n-1$, where $K=\sum_{r=1}^{n-1}len(P_r)$ is the sum of lengths of the (3,2,1)-covering paths $P_r$.
\end{theorem}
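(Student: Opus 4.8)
The plan is to bound the CNOT cost cascade by cascade and then sum, using the two gate identities recalled just before the statement. First I would record the primitives: by Figure~\ref{fig:cp} a lone $CR_d$ costs $2$ CNOTs and a lone $SWAP$ costs $3$ CNOTs, while by Figure~\ref{fig:crswap} a consecutive $CR_d$--$SWAP$ pair fuses into a single block of $3$ CNOTs. So the whole estimate reduces to (i) counting the control-phase gates, (ii) counting the SWAPs, and (iii) certifying how many SWAPs can be charged to a preceding $CR_d$, so that each such SWAP costs only one extra CNOT instead of three.

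Next I would count per cascade. In the $r$-th cascade the target is qubit $r$, and by the QFT structure it must receive a control-phase gate from each of the qubits $r+1,\dots,n$; since the set $U$ forbids repetition, this is exactly $n-r$ gates, hence $\sum_{r=1}^{n}(n-r)=\binom{n}{2}=\tfrac{n^2-n}{2}$ control-phase gates in total, contributing the base cost $n^2-n$. The SWAPs arise only from walking the target along $P_r$: in $\textsc{CascadeForPath}$, Step~4 performs one SWAP per edge of $P_r$, i.e. $len(P_r)$ of them, plus the closing move of Step~5. The crucial design feature is that each such SWAP is immediately preceded by the $CR_d$ acting on the very qubit it exchanges with (Step~3 just before Step~4, and the ``max index processed last'' rule of Step~2 just before Step~5), so each of these SWAPs fuses and costs only $1$ CNOT beyond the $CR_d$ it is paired with. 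Summing these fused increments over the cascades contributes $\sum_r len(P_r)=K$ on top of the base, and the degenerate tail cascades (where $P_{n-1}$ is a single vertex and $P_n$ is empty, so qubit $n$ needs neither a control-phase gate nor a move) trim the additive constant, yielding the claimed bound $K+n^2-n-1$.

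The main obstacle is the fusion bookkeeping together with the boundary constant. I must verify that every walking SWAP genuinely fuses: its swap-partner must be a qubit not yet in $U$, which can fail when the \emph{non-simple} covering path $P_r$ revisits a vertex or when a would-be partner has already served as a control earlier in the same cascade. I would argue that on the cost-critical count these non-fusing situations either do not occur or are neutralized by the ``max index last'' ordering, so that the number of fused pairs is at least $len(P_r)$ in each cascade, making the stated value a genuine upper bound. The second delicate point is pinning the additive constant to exactly $-n-1$: this forces a careful treatment of the last two cascades and of the closing SWAP of each cascade—deciding precisely which moves are charged to $K$ and which control-phase gate of the final cascade is free—since a coarser accounting that keeps every closing SWAP separate only yields $K+n^2-2$. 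Getting this constant exactly right, rather than up to an additive $O(n)$ term, is where the real care lies.
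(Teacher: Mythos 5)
Your route is the same as the paper's: both arguments reduce everything to the two identities of Figures \ref{fig:cp} and \ref{fig:crswap} (a lone $CR_d$ costs $2$ CNOTs, a fused $CR_d$--$SWAP$ pair costs $3$) and then count cascade by cascade. The paper's accounting for cascade $r$ is: there are $n-r$ controls in total; each of the $len(P_r)$ moves of the target along $P_r$ is charged $3$ (a fused pair, or a bare $SWAP$ of cost $3$ on a revisited vertex); every remaining control is charged $2$; this gives $3\,len(P_r)+2(n-r-len(P_r))=len(P_r)+2(n-r)$, with cascade $n-1$ set to $2=len(P_{n-1})+2-1$ and cascade $n$ to $0$, and summing yields $K+n^2-n-1$. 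The decisive difference from your count is that the paper places the closing neighbour $v_q$ in the cost-$2$ group, i.e.\ it charges only the $len(P_r)$ moves \emph{along} $P_r$ at cost $3$ and does not separately charge the Step-5 closing $SWAP$.

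That is exactly where your proposal has a genuine gap, and you have located it yourself: your honest count, $K+n^2-2$, includes one extra CNOT for the fused closing $SWAP$ in each of the first $n-2$ cascades, so it exceeds the claimed bound by a $\Theta(n)$ term incurred once per cascade; no treatment confined to ``the last two cascades'' can remove it. As written, the proposal asserts $K+n^2-n-1$ but derives only $K+n^2-2$; observing that the constant ``is where the real care lies'' identifies the difficulty without resolving it, so the proof is incomplete. (Your instinct here is sound: the algorithm really does apply a $CR_d$ followed by a $SWAP$ at $v_q$, so the per-cascade figure $len(P_r)+2(n-r)$ requires justifying why that pair is charged as $2$ rather than $3$, which the paper's proof does not spell out either.) Your second worry resolves more easily: on a revisited vertex the $SWAP$ does not fuse but still costs $3$, so the per-move charge of $3$ survives; the residual issue is only that the control which failed to fuse must then be counted once as a standalone cost-$2$ gate elsewhere, a bookkeeping point you should state explicitly rather than defer. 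To complete the argument you must either adopt and defend the paper's convention for the closing move, or settle for the (slightly weaker) constant your own count actually supports.
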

\begin{proof}
Let us show that the CNOT cost of $r$-th cascade is at most $len(P_r)+2(n-r)$. We apply $CR_d$ and SWAP gates for each element of the path $P_r$ and the neighbor of $v_{i_k}$ with the maximal index. If we visit a vertex more than once, then we apply only the SWAP gate. Both operations have a CNOT cost $3$. So, their complexity is $3len(P_r)$. For all other vertices, we apply only the $CR_d$ gate whose CNOT cost is $2$. In the $r$-th cascade, we have already excluded $r-1$ vertices. So, there are $n-r-len(P_r)$ rest vertices. The total CNOT cost of the $r$-th cascade is

\[3len(P_r)+2(n-r-len(P_r))=len(P_r)+2(n-r)\]

The cascade $n-1$ has the CNOT cost $2$ that can be represented as $len(P_r)+2(n-r)-1$ for $r=n-1$.
The cascade $n$ has the CNOT cost $0$. The total CNOT cost is 
\[\sum_{r=1}^{n-1}(len(P_r)+2(n-r))-1 =\sum_{r=1}^{n-1}len(P_r) +\sum_{r=1}^{n-1}(2(n-r))-1=K+n^2-n-1. \]
\end{proof}



We have two corollaries from this result. Firstly, we can estimate $K$ as $nk - 0.5k^2+1.5k$, where $k$ is the length of a (3,2,1)-covering path in the graph $G$. We present this result in Corollary $\ref{cr:maxK}$. Then, we obtain the minimal and maximal bounds for the CNOT cost in Corollary \ref{cr:bounds}. 

\begin{corollary}\label{cr:maxK}
 The CNOT cost of the circuit that is generated using Algorithm \ref{alg:qft-whole} is at most $nk-0.5k^2-1.5k+n^2-n$, where $k$ is the length of a (3,2,1)-covering path in the graph $G$. 
\end{corollary}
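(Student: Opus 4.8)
The plan is to start from Theorem \ref{th:qft}, which gives that the CNOT cost is at most $K+n^2-n-1$ with $K=\sum_{r=1}^{n-1}len(P_r)$, and to reduce the corollary to the purely combinatorial estimate $K\le nk-0.5k^2-1.5k$. Writing $\ell_r=len(P_r)$, once this estimate is in hand the corollary follows immediately by substitution, since $K+n^2-n-1\le nk-0.5k^2-1.5k+n^2-n-1\le nk-0.5k^2-1.5k+n^2-n$. So the whole work lies in bounding the sum $K$.

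First I would establish two bounds on the length $\ell_r$ of the covering path used in the $r$-th cascade. The first is a monotonicity bound $\ell_r\le\ell_1=k$: at the start of cascade $r$ the graph has lost exactly $r-1$ vertices, and each excluded vertex $v_q$ is, by construction (Step~4 of Algorithm \ref{alg:compute-s}), a neighbor of the endpoint of the previous covering path and is not one of its visited vertices. Hence a covering path of the larger graph stays a valid covering path of the smaller one, so the cost-minimal covering path cannot become longer; iterating gives $\ell_r\le k$ for every $r$. The second is a size bound: in cascade $r$ there are $N_r=n-r+1$ non-excluded vertices, and I would show, as a small auxiliary lemma in the spirit of Lemma \ref{lm:len-wnsh}, that a cost-minimal covering path of a connected graph on $N$ vertices has length at most $N-2$, which yields $\ell_r\le n-r-1$. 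The idea behind the auxiliary lemma is that a vertex already covered by adjacency need not be visited, so at least one vertex (the spare neighbor into which the target qubit is finally moved) can always be left off the path.

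Combining the two bounds gives $\ell_r\le\min\{k,\,n-r-1\}$. I would then split the sum at the crossover index $r=n-1-k$, where $n-r-1=k$: for $r\le n-1-k$ the bound $k$ is active, contributing $k(n-1-k)=nk-k-k^2$, while for $n-k\le r\le n-1$ the terms $n-r-1$ run through $k-1,k-2,\dots,0$, contributing $\sum_{j=0}^{k-1}j=\tfrac{k(k-1)}{2}$. Adding these gives $K\le nk-k-k^2+0.5k^2-0.5k=nk-0.5k^2-1.5k$, and substituting into the bound of Theorem \ref{th:qft} proves the claim.

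The main obstacle will be the two per-cascade bounds rather than the summation, which is routine arithmetic. The size bound $\ell_r\le N_r-2$ needs care because cost-minimal covering paths can be non-simple on tree-like subgraphs (spiders, long tails), so one must verify that even then the length stays strictly below $N_r-1$; and the monotonicity bound requires checking that the vertex excluded at the end of each cascade is genuinely inessential to covering, so that deleting it never forces a longer covering path in the next cascade. Both arguments hinge on the precise vertex-exclusion rule of the construction, so I would tie them directly to Steps~4--6 of the main algorithm.
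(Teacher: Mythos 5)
Your overall route is the same as the paper's: invoke Theorem \ref{th:qft}, bound each cascade length by $len(P_r)\le\min\{k,\,n-r-1\}$, split the sum at $r=n-1-k$, and do the same arithmetic (the paper gets $K\le nk-0.5k^2-1.5k+1$ because it counts $len(P_{n-1})=1$, but your slightly smaller intermediate bound still yields the stated inequality). The paper, however, does not prove these per-cascade bounds at all; it simply narrates a ``worst case'' in which the first $n-k-1$ paths have length $k$ and the remaining ones shrink by one per cascade. You correctly identify that the per-cascade bounds are the actual mathematical content, but the two justifications you propose do not go through.

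The auxiliary size lemma $\ell_r\le N_r-2$ is false for cost-minimal covering paths. Take a spider: a center $c$ with three legs $c-a_1-a_2-a_3-a_4$, $c-b_1-\dots-b_4$, $c-d_1-\dots-d_4$, so $N=13$. Every covering path must reach depth $3$ in each leg (to cover the tips), and with only two endpoints one leg must be traversed out and back; the cheapest option ($a_3\to\dots\to c\to\dots\to b_3\to\dots\to c\to\dots\to d_3$) has cost $3\cdot 11+2\cdot 3=39$, beating the alternatives that visit the tips, and its length is $12>N-2=11$. So non-simplicity on tree-like graphs does not merely ``need care'' --- it genuinely breaks the bound. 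The monotonicity bound $\ell_{r+1}\le\ell_r$ also has a gap: deleting the spare neighbor $v_q$ does show that the old optimal path remains a covering path of the smaller graph, hence the optimal \emph{cost} does not increase; but $cost(P)=3(len(P)-1)+2|B(P)|$ trades length against boundary size, so a cheaper path in the smaller graph may be longer (and the new optimal path need not cover $v_q$ in the original graph, so you cannot compare in the other direction either). To complete the proof you would need either to prove the per-cascade bounds by a different argument tied to how Algorithm \ref{alg:compute-s} prunes the graph, or to weaken the corollary; as written, the key lemma your proof rests on is not true.
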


\begin{proof}
In the worst case, the first $n-k-2$ cascades do not decrease the size of the (3,2,1)-covering paths, and $len(P_1)=\dots=len(P_{n-k-1})=k$. After that, we obtain a chain in which we have only vertices of the path $P_{n-k-1}=(v_{i_1},\dots,v_{i_k})$ and two vertices: one of them connected with $v_{i_1}$, and the second one is connected with $v_{i_k}$.



Then, the length of the paths decreases by $1$ for each next cascade, and $len(P_{r})=n-r-1$ for $n-k\leq r\leq n-2$, $len(P_{n-1})=1$. The final sum is
\[K=(n-k-1)k + 1 +\sum_{r=n-k}^{n-2}(n-r-1)=nk-k^2-k +1+ 0.5k^2-0.5k=
nk-0.5k^2-1.5k+1.\]

Due to Theorem \ref{th:qft}, the complexity is at most $nk-0.5k^2-1.5k+1+n^2-n-1 = nk-0.5k^2-1.5k+n^2-n$.

\end{proof}

\begin{corollary}\label{cr:bounds}
The CNOT cost of a circuit that is generated using Algorithm \ref{alg:qft-whole} is in the range between $n^2-2n-2$ and $2n^2-2n-2$. 
\end{corollary}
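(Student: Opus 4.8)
The plan is to read both endpoints of the interval directly off the exact per-circuit cost formula established in Theorem~\ref{th:qft}, namely that the CNOT cost equals $K+n^2-n-1$ with $K=\sum_{r=1}^{n-1}len(P_r)$, and then to bound the single quantity $K$ from below and from above. Since each path length is nonnegative, $K\ge 0$, so the cost is at least $n^2-n-1$; because $n^2-n-1\ge n^2-2n-2$ for every $n\ge 0$, the claimed lower endpoint $n^2-2n-2$ follows. The minimum $K=0$ is in fact attained by the densest graphs — for a complete graph every cascade's (3,2,1)-covering path is a single vertex — so the genuine minimum is $n^2-n-1$ and the stated bound is a clean relaxation of it.

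For the upper endpoint the key is to show $K\le n^2-n-1$ as well, so that $K+n^2-n-1\le 2(n^2-n-1)=2n^2-2n-2$. First I would observe that the working graph entering cascade $r$ has exactly $n-r+1$ vertices, since one vertex is excluded at the end of each completed cascade in Algorithm~\ref{alg:qft-whole}. Applying Lemma~\ref{lm:len-wnsh} to this graph gives $len(P_r)\le 2(n-r+1)-3=2(n-r)-1$, and summing yields
\[
K\le\sum_{r=1}^{n-1}\bigl(2(n-r)-1\bigr)=\Bigl(\sum_{r=1}^{n-1}2(n-r)\Bigr)-(n-1)=(n^2-n)-(n-1)=(n-1)^2 .
\]
Since $(n-1)^2\le n^2-n-1$ for $n\ge 2$, this gives $K\le n^2-n-1$ and hence the upper bound $2n^2-2n-2$.

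I expect the main obstacle to be justifying that Lemma~\ref{lm:len-wnsh} may legitimately be invoked at \emph{every} cascade: the lemma presupposes a connected graph, so I must argue that excluding the chosen maximal-index neighbour never disconnects the remaining working graph (otherwise a covering path, and therefore $len(P_r)$, is not even well defined). A second, more cosmetic point is that both displayed estimates are loose — together they actually pin the cost into the sharper window $[\,n^2-n-1,\;2n^2-3n\,]$ — so I would state explicitly that the cleaner symmetric interval $[\,n^2-2n-2,\;2n^2-2n-2\,]$ is obtained by relaxing $(n-1)^2\le n^2-n-1$ on the upper side and $-n-1\ge -2n-2$ on the lower side, both valid for all $n\ge 2$. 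Finally, to keep the lower bound rigorous without leaning on the formula being an inequality, I would note that the cost expression in Theorem~\ref{th:qft} is computed as an equality for the circuit produced by Algorithm~\ref{alg:qft-whole} (the ``at most'' only absorbs repeated visits, which are already counted inside $len(P_r)$); hence $K\ge 0$ delivers the genuine minimum $n^2-n-1$, comfortably above the stated $n^2-2n-2$.
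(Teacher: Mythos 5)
Your proposal is correct and follows essentially the same route as the paper: substitute bounds on $K$ into the cost formula of Theorem~\ref{th:qft}, using Lemma~\ref{lm:len-wnsh} on the $(n-r+1)$-vertex working graph to bound each $len(P_r)$ from above and a trivial lower bound on the path lengths from below. Your per-cascade estimate $2(n-r+1)-3$ is in fact slightly tighter than the paper's $2n-2r$, and the connectivity caveat you flag (that excluding the chosen neighbour must not disconnect the working graph) is likewise left implicit in the paper's own argument.
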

\begin{proof}
We can say that the length of the $P_r$ path is at most twice the number of vertices except two (in the beginning and at the end of the path), that is, $2n-2r$ due to Lemma \ref{lm:len-wnsh}, for $1\leq r\leq n-2$. At the same time, the minimal value is $1$ because the graph can be like a star (all vertices are connected to one), and the path is always the center of the star. The length $len(P_{n-1})=1$, and $len(P_{n})=0$ always. 

So, if $1\leq len(P_r) \leq 2n-2r$, then $n-1\leq K\leq \sum_{r=1}^{n-2}(2n-2r)+1=n^2-n-1$.

Due to Theorem \ref{th:qft}, CNOT cost of the circuit is in the range $n-1+n^2-n-1=n^2-2n-2$ and $n^2-n-1+n^2-n-1=2n^2-2n-2$.
\end{proof}

Let us make several remarks.
\begin{enumerate}
    \item If we use the approximate solution to the (3,2,1)-covering path problem, then the length of the (3,2,1)-covering path can be longer, but it cannot be longer than $2n$.
    \item When we say ``approximate'' solution, we do not mean approximate circuit for the QFT algorithm, but we mean approximate algorithm for constricting (3,2,1)-covering path that can give a larger quantum circuit with larger CNOT cost.
    \item The maximal number of neighbors $\Delta$ in current devices is often small (it can be $2,3,4$ or $5$ if we consider IBM or Regetti quantum devices). That is why $ln \delta$ can be a very small number.
    \item The cost of a (3,2,1)-covering path and the CNOT cost of a corresponding circuit for a cascade differ only in $1$. That is why the minimization of cost leads us to the minimization of CNOT cost of the circuit.  
\end{enumerate}

\subsection{Comparing With Other Results}\label{sec:qft4}
The most popular type of qubit connectivity graphs is the LNN architecture.
In that case, the graph is a chain, where a vertex $v_i$ is connected to $v_{i-1}$ and $v_{i+1}$.
For the architecture, the path visits all vertices from $v_2$ to $v_{n-1}$ one by one.
The circuit produced by our method is similar to the circuit developed in \cite{k2024aliya}. The length of the $P_r$ path is $n-r-1$, and $len(P_{n-1})=1$. Due to Theorem \ref{th:qft}, we get the following CNOT cost for the LNN architecture.
\begin{corollary}
The CNOT cost of the produced circuit for the QFT algorithm using $n$ qubits for the LNN architecture is $1.5n^2-2.5n+1$.
\end{corollary}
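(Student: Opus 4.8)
The plan is to reduce everything to Theorem~\ref{th:qft}, which already expresses the CNOT cost of the whole circuit as $K+n^2-n-1$ with $K=\sum_{r=1}^{n-1}len(P_r)$. Since the additive term $n^2-n-1$ is independent of the architecture, the entire task collapses to evaluating $K$ for the LNN chain. So first I would substitute the known LNN path lengths into this formula and then carry out the summation.

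The substance of the argument is therefore to pin down $len(P_r)$ for each cascade on the chain. I would argue that, because each cascade excludes exactly one vertex from an endpoint, the graph available at the start of cascade $r$ is again a chain, now on $n-r+1$ vertices. On such a chain the (3,2,1)-covering path produced by the algorithm runs along the remaining vertices and has length $n-r-1$ for $1\le r\le n-2$, while the penultimate cascade contributes $len(P_{n-1})=1$ and the last cascade is empty. These are exactly the values recorded just above the corollary, so I would either cite them or re-derive them from the description of the main algorithm specialized to the chain.

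With the lengths in hand the remaining work is routine. I would compute
\[ K=\sum_{r=1}^{n-2}(n-r-1)+1=\sum_{j=1}^{n-2}j+1=\frac{(n-1)(n-2)}{2}+1, \]
using the index shift $j=n-r-1$, and then substitute into Theorem~\ref{th:qft}:
\[ K+n^2-n-1=\frac{(n-1)(n-2)}{2}+n^2-n=\frac{n^2-3n+2+2n^2-2n}{2}=\frac{3n^2-5n+2}{2}, \]
which is exactly $1.5n^2-2.5n+1$.

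The only delicate point I anticipate is the justification of the per-cascade length $len(P_r)=n-r-1$: one must check that, under the vertex-exclusion bookkeeping of Algorithm~\ref{alg:compute-s}, the excluded vertices really peel off from the ends so that the residual graph stays a chain, and that the covering path chosen on that chain has the claimed length (including the off-by-one handling of the last two cascades, which is absorbed into the convention $len(P_{n-1})=1$). Once this structural fact is secured, the closed form for $\sum j$ finishes the proof immediately, so I expect the geometry of the chain, rather than the algebra, to be the part that needs care.
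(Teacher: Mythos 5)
Your proposal matches the paper's own argument: the paper likewise takes the LNN cascade lengths $len(P_r)=n-r-1$ for $1\le r\le n-2$ and $len(P_{n-1})=1$, sums them to get $K$, and substitutes into Theorem~\ref{th:qft}; your algebra reproduces $1.5n^2-2.5n+1$ correctly. The ``delicate point'' you flag (that the residual graph stays a chain and the covering path has the stated length) is asserted rather than proved in the paper as well, so you are not missing anything the authors supply.
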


It is the same CNOT cost as for the circuit from \cite{k2024aliya}.
The CNOT cost for the circuit from \cite{kkcw2025} is $1.5n^2-1.5n-1$.
At the same time, \cite{park2023reducing} gives the circuit with the CNOT cost $n^2+n-4$. Our circuit (like the circuit from \cite{k2024aliya}) is better than \cite{park2023reducing} only if $n\leq 5$. However, it is a reasonable restriction for current and near-future devices. If we look at one of the QFT applications which is the quantum phase estimation (QPE) algorithm \cite{k1995}, then we can see that $n$ is the precision of the phase estimation. In that case, $5$ bits is already a reasonable value.  
However, it is not known how to apply the results of \cite{park2023reducing} to more complex architecture. Note that our result is always better than the circuit from \cite{kkcw2025}. 

Secondly, let us consider more complex architectures like 16-qubit ``sun''  (Figure \ref{fig:sun1}, the left one), and 27-qubit  ``two joint suns'' (Figure \ref{fig:sun1}, the right one). 
The results circuit is the same as in \cite{k2024aliya}. The CNOT cost for the 16-qubit machine is $324$, and for the 27-qubit machine is $957$.
\begin{figure}[H]
\includegraphics[width=0.4\textwidth]{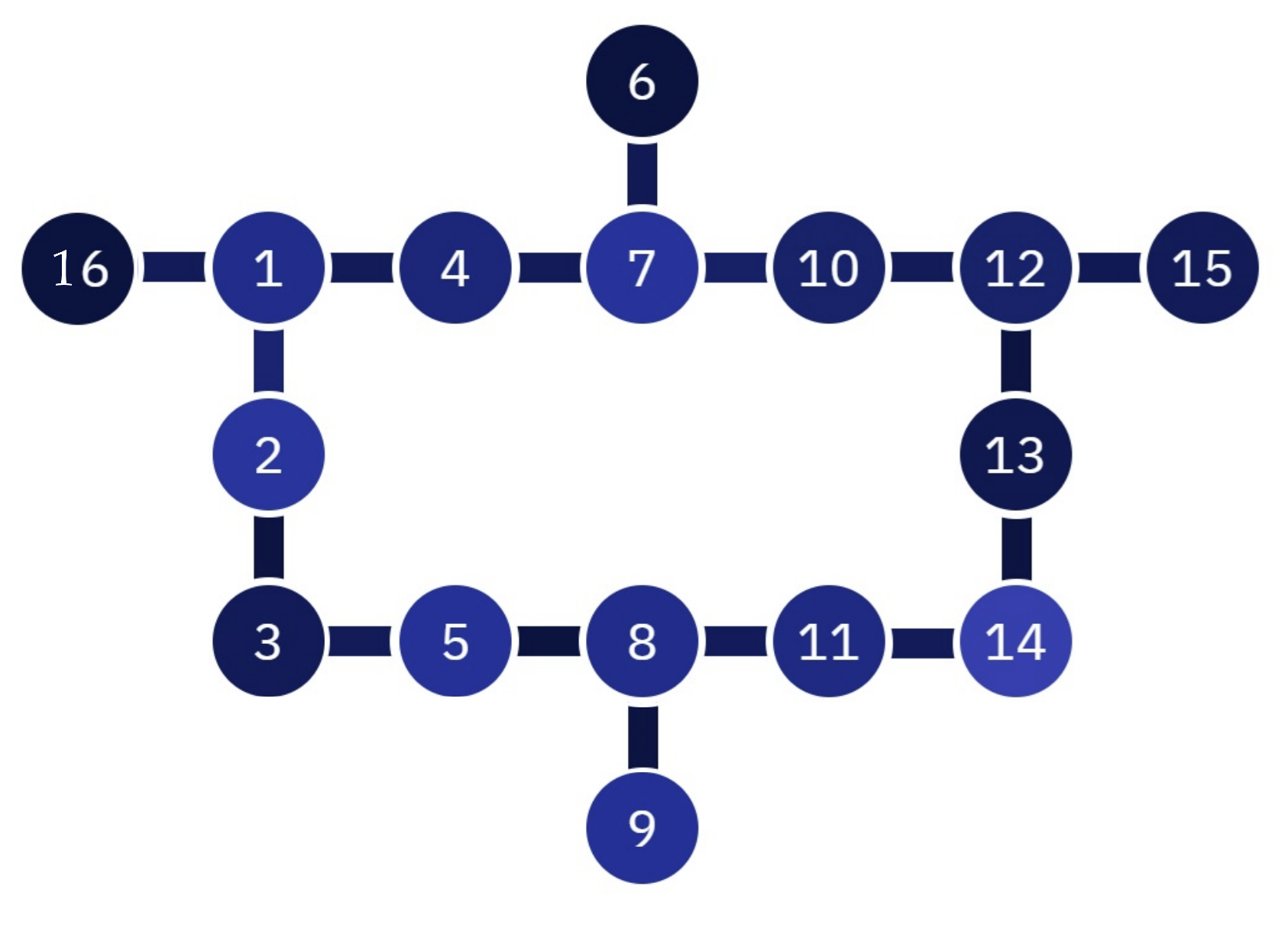}$\quad$
\includegraphics[width=0.6\textwidth]{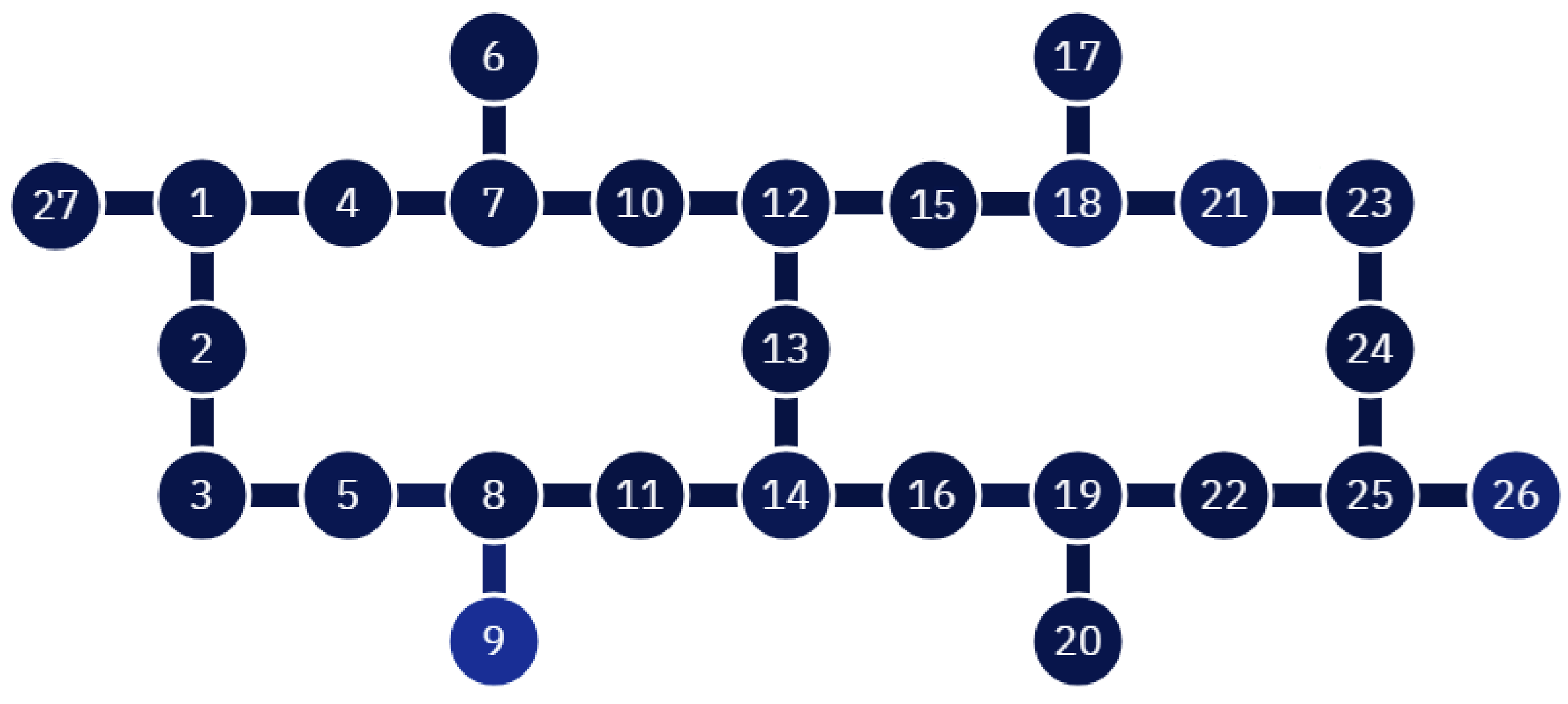}
\caption{\label{fig:sun1} ``Sun'' (16-qubit IBMQ Falcon r4P) architecture on the left. ``Two joint suns'' (27-qubit IBMQ Falcon r5.11) architecture on the right.}
\end{figure}

So, our generic method gives better circuits than the circuits generated by \cite{kkcw2025}, which CNOT costs are $342$ and $1009$ for 16-qubit and 27-qubit architectures, respectively. The difference between results is about 5\%.
\section{Conclusion}\label{sec:concl}
We present a generic method for constructing quantum circuits for the quantum Fourier transform algorithm for implementation on hardware with an arbitrary architecture of qubit connection. The method has $O((m+n)2^n)$ time complexity (and $O(mn\log n)$ in the case of the approximate solution) and it works for arbitrary connected graphs. Note that when we say ``approximate'' solution, we do not mean an approximate circuit for the QFT algorithm, but we mean an approximate algorithm for constricting (3,2,1)-covering path that can give us a quantum circuit with a larger CNOT cost.

Moreover, if we consider samples of graphs like ``sun'' (16-qubit IBMQ Falcon r4P architecture), and ``two joint suns'' (27-qubit IBMQ Falcon r5.11 architecture), then our generic algorithm gives us the same circuit as optimized especially for these graphs \cite{k2024aliya}. In the case of the LNN architecture, our algorithm gives a bit worse circuit compared to the technique optimized for these graphs \cite{park2023reducing}. At the same time, our approach works for arbitrary connected graphs, but the existing results work only for some specific graphs. 

Furthermore, our technique gives better results than the existing technique for arbitrary graphs \cite{kkcw2025}.

An open question is to develop a technique for QFT for an arbitrary connected graph that gives us the same or better results than the existing ones for LNN. The presented work gives a positive answer to similar questions for ``sun'' (16-qubit IBMQ Falcon r4P architecture), and ``two joint suns'' (27-qubit IBMQ Falcon r5.11 architecture) that were suggested in \cite{kkcw2025}.

%
%
%
 \bibliographystyle{splncs04}
 \bibliography{tcs}

 \newpage
\appendix

\appendix

\section{Implementation of $\textsc{GetNSPath}(S,v)$}\label{apx:getpath2}

\begin{algorithm}[H]
\caption{Implementation of $\textsc{GetNSPath}(S,v)$}\label{alg:getpath2}
\begin{algorithmic}
\State $P=()$ \Comment{We initialize it by an empty list}
\While{$F(S,v)\neq NULL$}
\State $u\gets F(S,v)$, $S\gets S\backslash\{v\}$
\State $P\gets \textsc{GetShortestPath}(u,v)\circ P$\Comment{We add $P_{u,v}$ path without the vertex $u$ to the begin of the list}
\State $v\gets u$
\EndWhile
\State $P\gets v\circ P$
\State \Return $P$
\end{algorithmic}
\end{algorithm}



\section{Implementation of $\textsc{GetShortestPath}(v,u)$}
\label{apx:getpath}
\begin{algorithm}[H]
\caption{Implementation of $\textsc{GetShortestPath}(v,u)$}\label{alg:getpath}
\begin{algorithmic}
\State $t\gets A[v,u]$
\State $P_{v,u}\gets (u)$
\While{$t\neq v$}
\State $P_{v,u}\gets t\circ P_{v,u} $
\State $t\gets  A[v,t]$
\EndWhile
\State \Return $P_{v,u}$
\end{algorithmic}
\end{algorithm}

\section{Implementation of the Procedure $\textsc{ShortestPathes}$ for Shortest Paths Searching}\label{apx:floyd}
Here we discuss how to construct matrices $W$ and $A$ such that $W[v,u]$ is the length of the shortest path between vertices $v$ and $u$, and $A[v,u]$ is the last vertex in the shortest path between $v$ and $u$. The procedures are simple, but we present them for the completeness of the results representation.

Firstly, we present a procedure $\textsc{SingleSrcShortestPath}(v)$ that finds the shortest paths for a single source vertex $v$ that is based on the BFS algorithm \cite{cormen2001}. The algorithm calculates the $v$-th rows of $W$ and $A$. The implementation is presented in Algorithm \ref{alg:bfs1}. Here we assume that we have a queue data structure \cite{cormen2001} that allows us to do the next actions in constant time:
\begin{itemize}
    \item $\textsc{Add}(queue, v)$ adds an element to the queue;
    \item $\textsc{Remove}(queue)$ removes an element from the queue and returns the element;
    \item  $\textsc{Init}()$ returns an empty queue;
     \item $\textsc{isEmpty}(queue)$ returns $True$ if the queue is empty and $False$ otherwise.
\end{itemize}
\begin{algorithm}[H]
\caption{Implementation of $\textsc{SingleSrcShortestPath}(v)$}\label{alg:bfs1}
\begin{algorithmic}
\State $queue\gets \textsc{Init}()$
\State  $\textsc{Add}(queue, v)$
\For{$u\in V$}
\State $W[v,u]\gets\infty$
\State $A[v,u]\gets NULL$
\EndFor
\State $W[v,v]\gets 0$
\While{$\textsc{isEmpty}(queue)=False$}
\State $t\gets\textsc{Remove}(queue)$
\For{$r\in \textsc{Neighbors}(t)$}
\If{$W[v,r]=\infty$}
    \State $A[v,r]\gets t$
    \State $W[v,r]=W[v,t]+1$
    \State $\textsc{Add}(queue, r)$
\EndIf
\EndFor
\EndWhile
\end{algorithmic}
\end{algorithm}
As an implementation of the $\textsc{ShortestPaths}$ procedure, we invoke $\textsc{SingleSrcShortestPath}(v)$ for each vertex $v\in V$.
\begin{algorithm}[H]
\caption{Implementation of $\textsc{ShortestPaths}(G)$ for a $G=(V,E)$ graph}\label{alg:bfs2}
\begin{algorithmic}
\State 
\For{$v\in V$}
    \State $\textsc{SingleSrcShortestPath}(v)$
\EndFor
\State \Return $(W,A)$
\end{algorithmic}
\end{algorithm}
\begin{lemma}
    The time complexity of the $\textsc{ShortestPathes}$ procedure is $O(n^3)$.
\end{lemma}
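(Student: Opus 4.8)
The plan is to first bound the cost of a single call to $\textsc{SingleSrcShortestPath}(v)$ and then multiply by the number of calls. I would begin by observing that the initialization loop over $u\in V$ that sets $W[v,u]\gets\infty$ and $A[v,u]\gets NULL$ costs $O(n)$. For the main \texttt{while} loop, the key invariant is that a vertex $r$ is added to the queue only inside the branch guarded by $W[v,r]=\infty$, and immediately afterward $W[v,r]$ is set to a finite value. Hence each vertex is enqueued (and therefore dequeued by $\textsc{Remove}$) at most once throughout the execution.

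Next I would account for the work done while processing the neighbors. When a vertex $t$ is dequeued, the inner \texttt{for} loop iterates over $\textsc{Neighbors}(t)$, doing $O(1)$ work per neighbor thanks to the constant-time queue operations assumed in the text. Since each vertex is dequeued at most once, the total number of neighbor inspections across the whole run is at most $\sum_{t\in V}|\textsc{Neighbors}(t)|=2m$ for an undirected graph, giving $O(m)$ work for all iterations of the inner loop combined. Adding the $O(n)$ initialization yields a per-source cost of $O(n+m)$.

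Finally, since $\textsc{ShortestPaths}(G)$ simply invokes $\textsc{SingleSrcShortestPath}(v)$ once for each of the $n$ vertices $v\in V$, the total cost is $O\bigl(n\,(n+m)\bigr)=O(n^2+nm)$. Using the crude bound $m\leq \binom{n}{2}=O(n^2)$ valid for any simple graph, this simplifies to $O(n^2+n\cdot n^2)=O(n^3)$, as claimed.

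I do not expect any genuine obstacle here, as this is the standard amortized analysis of $n$ repeated breadth-first searches. The only point requiring a little care is justifying that the per-source neighbor work is $O(m)$ rather than $O(n^2)$: this rests on the enqueue-once invariant together with the handshaking identity $\sum_{t}\deg(t)=2m$, which is why I would state that invariant explicitly before summing the neighbor inspections.
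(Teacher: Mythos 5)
Your proposal is correct and follows essentially the same route as the paper: bound a single call to $\textsc{SingleSrcShortestPath}$ by $O(n+m)$, multiply by the $n$ source vertices, and use $m=O(n^2)$ to conclude $O(n^3)$. The only difference is that you prove the $O(n+m)$ BFS bound from first principles (via the enqueue-once invariant and the handshaking identity), whereas the paper simply cites the standard reference for it.
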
 
\begin{proof}
Time complexity of BFS is $O(n+m)=O(n^2)$ due to \cite{cormen2001}.  Invocation of $n$ BFS algorithms for each $v\in V$ is $O(n^3)$. 
\end{proof}

\section{Quantum Fourier Transform}\label{apx:qft}
QFT is a quantum version of the discrete Fourier transform. The definitions of $n$-qubit QFT and its inverse are as follows:
\[QFT|j\rangle = \sum_{k=0}^{2^n-1}e^{\frac{2\pi i jk}{2^n}}|k\rangle,\]
\[QFT^{-1}|j\rangle = \sum_{k=0}^{2^n-1}e^{-\frac{-2\pi i jk}{2^n}}|k\rangle,\]
The $n$-qubit QFT circuit requires $0.5n^2 - 0.5n$ control phase ($CR_d$) gates and $n$ Hadamard ($H$) gates if we have no restriction on the application of two-qubit gates (See Figure \ref{fig:cqft}). The $CR_d$ gate is represented by basic gates that require two CNOT and three $R_z$ gates \cite{barenco1995elementary}. Therefore, $n^2 - n$ CNOT gates are required to construct an $n$-qubit QFT circuit. At the same time, if a quantum device has the LNN architecture, then for implementing the QFT, the number of CNOT gates is much larger than $n^2 - n$ \cite{fdh2004,swd2011,wld2014,kds2017,bbwdr2019,park2023reducing}. If we consider a general graph, then the situation is much worse than \cite{k2024aliya}.


\section{Implementation of $\textsc{CascadeForPath}(P,r)$ procedure}
\label{apx:qft-path}
\begin{algorithm}[H]
\caption{Implementation of $\textsc{CascadeForPath}(P,r)$ procedure. Algorithm of constructing the circuit for the $r$-th cascade for the path $P=(v_{i_1}, \dots,v_{i_{k}})$}\label{alg:qft-path}
\begin{algorithmic}
\State $j\gets 1$\Comment{Step 1}
\State $\textsc{H}(v_{i_j})$
\State $U\gets \emptyset$
\While{$j\leq k$}
\For{$t\in \textsc{Neighbors}(v_{i_j})\backslash\{v_{i_{j+1}}\}$}\Comment{Step 2}
\If{$v_t\not\in U$}
\State $d\gets T_t-r$
\State $\textsc{CR}_d(v_t,v_{i_j})$
\State $U\gets U\{v_t\}$
\EndIf
\EndFor
\If{$j\leq k-1$}
\If{$v_{i_{j+1}}\not\in U$}\Comment{Step 3}
\State $d\gets T_{i_{j+1}}-r$
\State $\textsc{CR}_d(v_{i_{j+1}},v_{i_j})$
\State $U\gets U\{v_{i_{j+1}}\}$
\EndIf

\State $\textsc{swap}(v_{i_j},v_{i_{j+1}})$\Comment{Step 4}
\State $w_1\gets T_{i_j}, w_2\gets T_{i_{j+1}}$
\State $Q_{w_1}\gets i_{j+1}$, $Q_{w_2}\gets i_{j}$ 
\State $T_{i_j}\gets w_2$, $T_{i_{j+1}}\gets w_1$
\Else
\State $q=\max\{j:$ $v_j$ is not excluded,$ v_j\in \textsc{Neighbors}(v_{i_k}), j\neq i_{k-1}\}$
\State $\textsc{swap}(v_{i_j},v_{q})$\Comment{Step 5}
\State $w_1\gets T_{i_j}, w_2\gets T_{q}$
\State $Q_{w_1}\gets q$, $Q_{w_2}\gets i_{j}$ 
\State $T_{i_j}\gets w_2$, $T_{q}\gets w_1$
\EndIf

\State $j\gets j+1$

\EndWhile
\end{algorithmic}
\end{algorithm}

\end{document}